\newdimen\proofrulebreadth \proofrulebreadth=.05em
\newdimen\proofdotseparation \proofdotseparation=1.25ex
\newdimen\proofrulebaseline \proofrulebaseline=2ex
\let\then\relax
\def\hfi{\hskip0pt plus.0001fil}
\mathchardef\squigto="3A3B
\newif\ifinsideprooftree\insideprooftreefalse
\newif\ifonleftofproofrule\onleftofproofrulefalse
\newif\ifproofdots\proofdotsfalse
\newif\ifdoubleproof\doubleprooffalse
\let\wereinproofbit\relax
\newdimen\shortenproofleft
\newdimen\shortenproofright
\newdimen\proofbelowshift
\newbox\proofabove
\newbox\proofbelow
\newbox\proofrulename
\def\shiftproofbelow{\let\next\relax\afterassignment\setshiftproofbelow\dimen0 }
\def\shiftproofbelowneg{\def\next{\multiply\dimen0 by-1 }%
\afterassignment\setshiftproofbelow\dimen0 }
\def\setshiftproofbelow{\next\proofbelowshift=\dimen0 }
\def\setproofrulebreadth{\proofrulebreadth}
\def\prooftree{
%
\ifnum  \lastpenalty=1
\then   \unpenalty
\else   \onleftofproofrulefalse
\fi
%
\ifonleftofproofrule
\else   \ifinsideprooftree
        \then   \hskip.5em plus1fil
        \fi
\fi
%
\bgroup
\setbox\proofbelow=\hbox{}\setbox\proofrulename=\hbox{}%
\let\justifies\proofover\let\leadsto\proofoverdots\let\Justifies\proofoverdbl
\let\using\proofusing\let\[\prooftree
\ifinsideprooftree\let\]\endprooftree\fi
\proofdotsfalse\doubleprooffalse
\let\thickness\setproofrulebreadth
\let\shiftright\shiftproofbelow \let\shift\shiftproofbelow
\let\shiftleft\shiftproofbelowneg
\let\ifwasinsideprooftree\ifinsideprooftree
\insideprooftreetrue
%
\setbox\proofabove=\hbox\bgroup$\displaystyle 
\let\wereinproofbit\prooftree
%
\shortenproofleft=0pt \shortenproofright=0pt \proofbelowshift=0pt
%
\onleftofproofruletrue\penalty1
}
\def\eproofbit{
%
\ifx    \wereinproofbit\prooftree
\then   \ifcase \lastpenalty
        \then   \shortenproofright=0pt  
        \or     \unpenalty\hfil         
        \or     \unpenalty\unskip       
        \else   \shortenproofright=0pt  
        \fi
\fi
%
\global\dimen0=\shortenproofleft
\global\dimen1=\shortenproofright
\global\dimen2=\proofrulebreadth
\global\dimen3=\proofbelowshift
\global\dimen4=\proofdotseparation
\global\count255=\proofdotnumber
%
$\egroup  
%
\shortenproofleft=\dimen0
\shortenproofright=\dimen1
\proofrulebreadth=\dimen2
\proofbelowshift=\dimen3
\proofdotseparation=\dimen4
\proofdotnumber=\count255
}
\def\proofover{
\eproofbit 
\setbox\proofbelow=\hbox\bgroup 
\let\wereinproofbit\proofover
$\displaystyle
}%
\def\proofoverdbl{
\eproofbit 
\doubleprooftrue
\setbox\proofbelow=\hbox\bgroup 
\let\wereinproofbit\proofoverdbl
$\displaystyle
}%
\def\proofoverdots{
\eproofbit 
\proofdotstrue
\setbox\proofbelow=\hbox\bgroup 
\let\wereinproofbit\proofoverdots
$\displaystyle
}%
\def\proofusing{
\eproofbit 
\setbox\proofrulename=\hbox\bgroup 
\let\wereinproofbit\proofusing
\kern0.3em$
}
\def\endprooftree{
\eproofbit 
  \dimen5 =0pt
%
\dimen0=\wd\proofabove \advance\dimen0-\shortenproofleft
\advance\dimen0-\shortenproofright
%
\dimen1=.5\dimen0 \advance\dimen1-.5\wd\proofbelow
\dimen4=\dimen1
\advance\dimen1\proofbelowshift \advance\dimen4-\proofbelowshift
%
\ifdim  \dimen1<0pt
\then   \advance\shortenproofleft\dimen1
        \advance\dimen0-\dimen1
        \dimen1=0pt
        \ifdim  \shortenproofleft<0pt
        \then   \setbox\proofabove=\hbox{%
                        \kern-\shortenproofleft\unhbox\proofabove}%
                \shortenproofleft=0pt
        \fi
\fi
%
\ifdim  \dimen4<0pt
\then   \advance\shortenproofright\dimen4
        \advance\dimen0-\dimen4
        \dimen4=0pt
\fi
%
\ifdim  \shortenproofright<\wd\proofrulename
\then   \shortenproofright=\wd\proofrulename
\fi
%
\dimen2=\shortenproofleft \advance\dimen2 by\dimen1
\dimen3=\shortenproofright\advance\dimen3 by\dimen4
%
\ifproofdots
\then
        \dimen6=\shortenproofleft \advance\dimen6 .5\dimen0
        \setbox1=\vbox to\proofdotseparation{\vss\hbox{$\cdot$}\vss}%
        \setbox0=\hbox{%
                \advance\dimen6-.5\wd1
                \kern\dimen6
                $\vcenter to\proofdotnumber\proofdotseparation
                        {\leaders\box1\vfill}$%
                \unhbox\proofrulename}%
\else   \dimen6=\fontdimen22\the\textfont2 
        \dimen7=\dimen6
        \advance\dimen6by.5\proofrulebreadth
        \advance\dimen7by-.5\proofrulebreadth
        \setbox0=\hbox{%
                \kern\shortenproofleft
                \ifdoubleproof
                \then   \hbox to\dimen0{%
                        $\mathsurround0pt\mathord=\mkern-6mu%
                        \cleaders\hbox{$\mkern-2mu=\mkern-2mu$}\hfill
                        \mkern-6mu\mathord=$}%
                \else   \vrule height\dimen6 depth-\dimen7 width\dimen0
                \fi
                \unhbox\proofrulename}%
        \ht0=\dimen6 \dp0=-\dimen7
\fi
%
\let\doll\relax
\ifwasinsideprooftree
\then   \let\VBOX\vbox
\else   \ifmmode\else$\let\doll=$\fi
        \let\VBOX\vcenter
\fi
\VBOX   {\baselineskip\proofrulebaseline \lineskip.2ex
        \expandafter\lineskiplimit\ifproofdots0ex\else-0.6ex\fi
        \hbox   spread\dimen5   {\hfi\unhbox\proofabove\hfi}%
        \hbox{\box0}%
        \hbox   {\kern\dimen2 \box\proofbelow}}\doll%
%
\global\dimen2=\dimen2
\global\dimen3=\dimen3
\egroup 
\ifonleftofproofrule
\then   \shortenproofleft=\dimen2
\fi
\shortenproofright=\dimen3
%
\onleftofproofrulefalse
\ifinsideprooftree
\then   \hskip.5em plus 1fil \penalty2
\fi
}
\newcommand{\infer}[2]
     {\prooftree
          #1 
          \justifies #2
      \endprooftree}
\newcommand{\ie}{i.e.\ }
\newcommand{\sch}{\mathcal{S}}  
\newcommand{\restrict}[1]{#1\!\downarrow\!1 }
\newcommand{\config}{\mathit{Config}}
\newcommand{\ens}[1]{\mathbb{#1}}
\newcommand{\dom}{\textit{dom}}
\newcommand{\expr}[1]{\mathit{\MakeUppercase #1}}
\newcommand{\ea}{\expr{E}}
\newcommand{\eaa}{\ea_1}
\newcommand{\ean}{\ea_n}
\newcommand{\FV}{\mathcal{V}}
\newcommand{\Var}{\ens{V}}
\newcommand{\variable}[1]{\mathit{\MakeUppercase #1}}
\newcommand{\xa}{\variable{x}}
\newcommand{\xaa}{\variable{x}_1}
\newcommand{\xai}{\variable{x}_i}
\newcommand{\xan}{\variable{x}_n}
\newcommand{\xb}{\variable{y}}
\newcommand{\xc}{\variable{z}}
\newcommand{\xd}{\variable{u}}
\newcommand{\threada}{\mathit{x}}
\newcommand{\threadb}{\mathit{y}}
\newcommand{\Dom}{\ens{W}}
\newcommand{\elmt}[1]{\mathit{#1}}
\newcommand{\da}{\elmt{d}}
\newcommand{\daa}{\elmt{d}_1}
\newcommand{\dai}{\elmt{d}_i}
\newcommand{\dan}{\elmt{d}_n}
\newcommand{\du}{\elmt{u}}
\newcommand{\dv}{\elmt{v}}
\newcommand{\dw}{\elmt{w}}
\newcommand{\db}{\elmt{b}}
\newcommand{\Operator}{\ens{O}}
\newcommand{\oper}[1]{\mathit{#1}}
\newcommand{\op}{\oper{op}} 
\newcommand{\cst}[1]{c_{#1}} 
\newcommand{\msuc}[1]{\oper{suc}_{#1}}
\newcommand{\mpred}{\oper{pred}}
\newcommand{\meq}[1]{\oper{eq}_{#1}}
\newcommand{\Command}[1]{\mathit{\MakeUppercase #1}}
\newcommand{\ca}{\Command{C}}
\newcommand{\cb}{\Command{C'}}
\newcommand{\cc}{\Command{D}}
\newcommand{\cd}{\Command{D'}}
\newcommand{\rgl}{::=}
\newcommand{\instr}[1]{\mathtt{#1}}
\newcommand{\iasg}{\instr{:=}}
\newcommand{\isep}{\instr{\ ;\ }}
\newcommand{\iwh}{\instr{while}}
\newcommand{\iwhile}{\instr{while}}
\newcommand{\iif}{\instr{if\ }}
\newcommand{\ithen}{\instr{\ then\ }}
\newcommand{\ielse}{\instr{\ else\ }}
\newcommand{\iskip}{\instr{skip}}
\newcommand{\typenv}{\Gamma}
\newcommand{\typop}{\Delta}
\newcommand{\Imp}{\vDash}
\newcommand{\imp}{\vdash}
\newcommand{\store}{\mu}
\newcommand{\storb}{\sigma}
\newcommand{\Cmpt}[1]{\Rightarrow_{#1}}
\newcommand{\Csmt}[1]{\stackrel{\texttt{\,#1}}{\to}}
\newcommand{\ptime}{ \mathrm{FPtime}}
\newcommand{\Temps}[1]{\textit{Time}_{#1}}
\newcommand{\temps}[2]{\Temps{#1}(#2)}
\newcommand{\slat}[1]{\ty{#1}}
\newcommand{\sla}{\slat{\alpha}}
\newcommand{\slb}{\slat{\beta}}
\newcommand{\SL}{\{\tiera,\tierb\}}
\newcommand{\ord}{\preceq}
\newcommand{\meet}{\wedge}
\newcommand{\join}{\vee}
\newcommand{\tier}[1]{\mathbf{#1}}
\newcommand{\tiera}{\tier{0}}
\newcommand{\tierb}{\tier{1}}
\newcommand{\vrai}{\texttt{tt}}
\newcommand{\faux}{\texttt{ff}}
\newcommand{\ty}[1]{#1}
\newcommand{\tya}{\ty{\alpha}}
\newcommand{\tyaa}{\ty{\alpha_1}}
\newcommand{\tyan}{\ty{\alpha_n}}
\newcommand{\tyb}{\ty{\beta}}
\newcommand{\sra}{\rightarrow}
\newcommand{\motVide}{\oper{\epsilon}}
 \newcommand{\envprog}[1]{\texttt{#1}}
 \newcommand{\gauche}{\envprog{Left}}
 \newcommand{\droite}{\envprog{Right}}
 \newcommand{\vetat}{\envprog{State}}
\newcommand{\dord}{\unlhd} 
\newcommand{\inuniv}[1]{\instr{exec}}
\newcommand{\inspec}[1]{\instr{spec}}
\newcommand{\sem}[1]{\llbracket #1 \rrbracket}
\newcommand{\taille}[1]{|#1|} 
\begin{document}
%
%
\pagestyle{headings}  

\title{Complexity Information Flow in a Multi-threaded Imperative Language}
\author{Jean-Yves Marion \and Romain P\'echoux}
\institute{Universit\'e de Lorraine, CNRS and INRIA \\
   LORIA \\
\email{jean-yves.marion@loria.fr,romain.pechoux@loria.fr}
}

\maketitle

\begin{abstract}
We propose a type system to analyze the time consumed by multi-threaded imperative programs with a shared global memory, which delineates a class of safe multi-threaded programs.
 We demonstrate that a safe multi-threaded program runs in polynomial time if 
 (i)  it is strongly terminating wrt a non-deterministic scheduling policy or (ii) it terminates wrt a deterministic and quiet scheduling policy. 
 As a consequence, we also characterize the set of polynomial time functions.
The type system presented is based on the fundamental notion of data tiering, which is central in implicit computational complexity. It regulates the information flow in a computation. This aspect is  interesting in that the type system bears a resemblance to typed based information flow analysis and notions of non-interference.
As far as we know, this is the first characterization by a type system of polynomial time multi-threaded programs.
\end{abstract}

\section{Introduction}
The objective of this paper is to study the notion of complexity flow analysis introduced in~\cite{MarionLICS} in the setting of concurrency.  Our model of concurrency is a simple multi-threaded imperative programming language where threads communicate through global shared variables.
The measure of time complexity that we consider for multi-threaded programs is the processing time. That is the total time for all threads to complete their tasks. As a result, the time measure gives an upper bound on the number of scheduling rounds.
The first contribution of this paper is a novel type system, which guarantees that each strongly terminating safe multi-threaded program runs in polynomial time  (See Section~\ref{sec:safeprog} and Theorem~\ref{thm:Main}). Moreover, the runtime upper bound holds for all thread interactions.
As a simple example, consider the two-thread program:
\begin{center}
\begin{tabular}{lcl}
\begin{lstlisting}
$\threada:\ \iwhile (\xa^{\tierb}==\xb^{\tierb}) \{ \iskip \}$
   	${ \ca; }$
	$\xa^{\tierb} \iasg \neg \xa^{\tierb}$
\end{lstlisting}
	& \hspace{2cm} &
\begin{lstlisting}
$\threadb: \ \iwhile (\xa^{\tierb} \neq \xb^{\tierb}) \{ \iskip \}$
   	${ \ca'; }$
	$\xb^{\tierb} \iasg \neg \xb^{\tierb}$
\end{lstlisting}
\end{tabular}
\end{center}

This example  illustrates a simple synchronization protocol between two threads $\threada$ and $\threadb$.  Commands $\ca$ and $\ca'$ are critical sections, which are assumed not to modify $\xa$ and $\xb$. The operator $\neg$ denotes the boolean negation. 
Both threads are safe if commands $\ca$ and $\ca'$ are safe with respect to the same typing environment.
Our first result states that this  two-thread program runs in polynomial time (in the size of the initial shared variable values) if it is strongly terminating and safe.

Then, we consider a class of deterministic schedulers, that we call quiet (see Section~\ref{sec:ds}). 
The class of deterministic and quiet schedulers contains all deterministic scheduling policies  which depend only on threads.  
A typical example is a round-robin scheduler. 
The last contribution of this paper is that a safe multi-threaded program which is terminating wrt to a deterministic and quiet scheduler, runs in polynomial time.
 Despite the fact it is not strongly terminating, the two-thread program below terminates under a round-robin scheduler, if $\ca$ and $\cb$ terminate.
 \begin{center}
\begin{tabular}{lcl}
\begin{lstlisting}
   $\threada: \ \iwhile ( \xa^{\tierb}> 0)$
       $\{\ca;$
       $\xc^\tierb \iasg 0 : \tierb\} : {\tierb}$ 
 \end{lstlisting}
	& \hspace{2cm} &
\begin{lstlisting}
  $\threadb:\ \iwhile ( \xc^{\tierb}> 0)$
      $\{\ca'; $
       $\xa^\tierb \iasg 0 : \tierb \} : {\tierb}   $
\end{lstlisting}
\end{tabular}
\end{center}
If commands $\ca$ and $\cb$ are safe, then this two-thread program runs in polynomial time wrt to a round-robin scheduler.
The last contribution is that if we just consider one-thread programs, then we characterize exactly $\ptime$, which is the class of polynomial time functions. (See Theorem~\ref{erst})

The first rational behind our type system comes from data-ramification concept of Bellantoni and Cook~\cite{BC92} and Leivant~\cite{Lei-predicativeI}. The type system has two atomic types $\tiera$ and $\tierb$ that we called tiers. The type system precludes that values flow from tier $\tiera$ to tier $\tierb$ variables.  Therefore, it prevents circular algorithmic definitions, which may possibly lead to an exponential length computation. More precisely, explicit flow from $\tiera$ to $\tierb$ is forbidden by requiring that the type level of the assigned variable is less or equal than the type level of the source expression. Implicit flow is prevented by requiring that (i) branches of a conditional are of the same type and (ii) guard and body of while loops are of tier $\tierb$. If we compare with data-ramification concept of~\cite{BC92,Lei-predicativeI}, tier $\tierb$ parameters correspond to variables on which a ramified recursion is performed whereas tier $\tiera$ parameters correspond to variables on which recursion is forbidden. 

The second rational behind our type system comes from secure flow analysis. See Sabelfeld and Myers survey~\cite{SabelfeldMyersJSAC} to have an overview on information flow analysis. In~\cite{VolpanoIS96} for sequential imperative programs and in~\cite{SV98} for multi-threaded imperative programming language, Irvine,  Smith and Volpano give a type system to certify a confidentiality policy. Types are based on security levels say H (High) and L (Low). The type system prevents that there is no leak of information from level H to level L, which is similar to our type system: $\tiera$ (resp. $\tierb$) corresponds to H (resp. L). In fact, our approach rather coincides with an integrity policy~\cite{Biba77} (i.e "no read down" rule) than with a confidentiality one~\cite{BellLapadula76}.
A key property is the non-interference, which says that values of level L don't changed values of level H. We demonstrate a similar non-interference result which states that values stored in tier $\tierb$ variables are independent from tier $\tiera$ variables. See Section~\ref{sec:NI} for a precise statement.
From this, we demonstrate a temporal non-interference properties which expresses that the number of unfolded (i.e. the length) while loops only depends on tier $\tierb$ variables, see Section~\ref{sec:TNI}. The temporal non-interference property is the crucial point to establish complexity bounds.

From a practical standpoint, an important issue is the expressivity of the class of safe multi-threaded programs. 
With this work and~\cite{MarionLICS}, we introduce a new approach in implicit computational complexity based on a type system. This study focuses on the intrinsic mechanisms which lead to analyze computational complexity.
This approach seems promising because it treats common algorithmic control structures like while-loops as well as sequential and parallel composition. Several examples are presented in Appendix.

\noindent\textbf{Related works.}
An important source of inspiration comes from Implicit Computational Complexity (ICC). Beside the works of Bellantoni, Cook and Leivant already cited, there are works on light logics~\cite{Girard98,Baillot}, on linear types~\cite{Hofmann03}, and interpretation methods~\cite{BMM11,MP09}, just to mention a few. There are also works on resource control of imperative language like~\cite{Jones01,JonesK09,NigglW06}. 
Only a few studies based on ICC methods are related to resource control of concurrent computational models. 
In~\cite{ADZ04}, a bound on the resource needed by synchronous cooperative threads in a functional framework is computed. The paper~\cite{AD07} provides a static analysis for ensuring feasible reactivity in a synchronous $\pi$-calculus. In~\cite{MA11} an elementary affine logic is introduced to tame the complexity of a modal call-by-value lambda calculus with multi-threading and side effects. There are also works on the termination of multi-threaded imperative languages~\cite{CPR07}. In this paper, we separate complexity analysis from termination analysis but the tools on termination can be combined with our results since most of them require strong normalization of the considered process as an assumption.
Finally our type system in this paper may be seen as a simplification of the type system of~\cite{MarionLICS} for imperative language but in return there is no declassification mechanism. 


\section{A complexity flow type system} \label{sec:CFTS}
\subsection{A multi-threaded programming language} \label{subSynSem}
We introduce a multi-threaded imperative programming language similar to the language of~\cite{SV98,Boudol} and which is an extension of the simple while-imperative programming language of~\cite{JonesCC}. A multi-threaded program consists in a finite set of threads where each thread is a while-program. Threads run concurrently on a common shared memory. A thread interacts with other threads by reading and writing on the shared memory. 

Commands and expressions are built from a set $\Var$ of variables, and a set $\Operator$ of operators of fixed arity including constants (operators of arity $0$) as follows:
$$
 \begin{array}{llll}
\textit{Expressions} \qquad \qquad & \eaa,\ldots,\ean  
  & \rgl &\xa 
   \ |\ \op(\eaa,\ldots,\ean)   \qquad \qquad \qquad  \xa \in \Var, \op  \in \Operator \\
\textit{Commands} &  \ca, \cb  
    & \rgl & \xa \iasg \ea \ | \  \ca \isep \cb \ |\   \iskip \ |\ \iif \ea \ithen \ca \ielse \cb \\
& & & \ |\ \iwh(\ea)\{ \ca \}  
 \end{array}
$$

A multi-threaded program $M$ (or just program when there is no ambiguity) is a finite map 
from thread identifiers $\threada, \threadb,\ldots$ to commands. We write $\dom(M)$ to denote the set of thread identifiers. 
 Note also that we do not consider the ability of generating new threads.
Let $\FV(I)$ be the set of variables occurring in $I$, where $I$ is an expression,  a command or a multi-threaded program.

\subsection{Semantics}
We give a standard small step operational semantics for multi-threaded programs. 
Let $\Dom$ be the set of words\footnote{Our result could be generalized to other domains such as binary trees or lists. However we have restricted this study to words in order to lighten our presentation.} over a finite alphabet  $\Sigma$ including two words $\vrai$ and $\faux$ that denote true and false. The length of a word $\da$ is denoted  $\taille{\da}$. A store $\store$ is a finite mapping from $\Var$ to $\Dom$.  
We write $\store[\xaa \leftarrow \daa,\ldots,\xan \leftarrow \dan]$ to mean the store $\store'$ where $\xai$ is updated to $\dai$.

The evaluation rules for expressions and commands are given in  Figure~\ref{fig:Com}.  
Each operator of arity $n$ is interpreted by a total function $\sem{\op}:\Dom^n \mapsto \Dom$. 
The judgment $\store \Imp \ea \Csmt{e} \da$ means that the expression $\ea$ is evaluated to the word $\da \in \Dom$ wrt $\store$.
A configuration $c$ is either a pair of store and command, $\store \Imp \ca$, or a store $\store$.
The judgment $\store \Imp \ca \Csmt{s} \store'$ 
expresses that $\ca$ terminates and outputs the store $\store'$.
 $\store \Imp \ca \Csmt{s} \store' \Imp \ca'$ means that the evaluation of $\ca$ is still in progress: 
the command has evolved to $\ca'$ and the store has been updated to $\store'$. 

For a multi-threaded program $M$, the store $\store$ plays the role of a global memory shared by all threads.
The store $\store$ is the only way for threads to communicate. The definition of  the global relation $\Csmt{g}$ is given in Figure~\ref{fig:Com}, where $M\!-\!\threada$ is the restriction of $M$ to $dom(M)\!-\!\{\threada\}$ 
and $ M[\threada:=\ca_1]$ is the map $M$ where the command assigned to $\threada$ is updated to $\ca_1$.  
At each step, a thread $\threada$ is chosen non-deterministically. Then, one step of $\threada$ is performed and the control returns to the upper level.  Note that the rule  \textit{(Stop)} halts the computation of a thread. In what follow, let $\emptyset$ be a notation for the (empty) multi-threaded program (i.e. all threads have terminated).
We will discuss of deterministic scheduling policy in the last section. 
\begin{figure*}[t]
\hrulefill\\
$$
\begin{array}{c}
\infer{\phantom{ \Imp \xa \iasg \ea \Csmt{s} }}{\store \Imp \xa \Csmt{e} \store(\xa)}  
\quad \quad \infer{\store \Imp \ea_1 \Csmt{e} \da_1 \quad \ldots \quad \store \Imp \ea_n \Csmt{e} \da_n } 
{\store \Imp \op(\ea_1,\ldots,\ea_n) \Csmt{e} \sem{\op}(\da_1,\ldots,\da_n)} 
\\[15pt]
\infer{\phantom{ \Imp \xa \iasg \ea \Csmt{s} } }
{\store \Imp \iskip \Csmt{s} \store} 
\quad \quad \infer{\store \Imp \ea \Csmt{e} \da }
{\store \Imp \xa \iasg \ea \Csmt{s} \store[\xa \leftarrow \da]}
\quad \quad \infer{\store \Imp \ca_1 \Csmt{s} \store_1} 
{\store \Imp \ca_1 \isep \ca_2 \Csmt{s} \store_1 \Imp \ca_2} 
\\[15pt]
\infer{\store \Imp \ca_1 \Csmt{s} \store_1 \Imp \ca_1' } 
{\store \Imp \ca_1 \isep \ca_2 \Csmt{s} \store_1 \Imp \ca_1' ; \ca_2} 
\quad \quad \infer{\store \Imp \ea \Csmt{e} w,\ w\in \{\vrai,\faux\}} 
{\store \Imp \iif \ea \ithen \ca_{\vrai} \ielse \ca_{\faux} \ \Csmt{s}\ \store \Imp \ca_{w}} 
 \\[15pt]
\infer{\store \Imp \ea \Csmt{e} \faux} 
{\store \Imp \iwhile (\ea) \{\ca\} \ \Csmt{s}\ \store}
\quad \quad \infer{\store \Imp \ea \Csmt{e} \vrai }
{\store \Imp  \iwhile (\ea) \{\ca\} \ \Csmt{s}\ \store \Imp \ca ;  \iwhile (\ea) \{\ca\}}\textit{(W}_\vrai\textit{)}
\\[15pt]
\infer{M(\threada)= \ca \quad \store \Imp \ca \Csmt{s} \store_1  }
{\store \Imp  M \ \Csmt{g}\ \store_1 \Imp M-\threada}\ \textit{(Stop)}
\quad \quad \infer{M(\threada)= \ca \quad \store \Imp \ca \Csmt{s} \store_1 \Imp \ca_1 }
{\store \Imp  M \ \Csmt{g}\ \store_1 \Imp M[\threada:=\ca_1]}\ \textit{(Step)}
\end{array}
$$
\caption{Small step semantics of expressions, commands and multi-threads}
\label{fig:Com}
\hrulefill
\end{figure*}

 A multi-threaded program $M$ is strongly terminating, noted $M\!\Downarrow$, 
if for any store, all reduction sequences starting from $M$ are finite.
Let  $\Csmt{h}^k$ be the $k$-fold self composition and $\Csmt{h}^*$ be the reflexive and transitive closure of the relation $\Csmt{h}$, $\texttt{h} \in \{\texttt{s,g}\}$.
The running time of a strongly terminating program $M$ is the function $\Temps{M}$ from $\Dom^n$ to $\mathbb{N}$
defined by:
\begin{gather*}
\temps{M}{\da_1,\ldots,\da_n} = \max \{ k \ | \  \store_0[\xaa \leftarrow \daa,\ldots,\xan \leftarrow \dan] \Imp M \Csmt{g}^k \store \Imp \emptyset\}
\end{gather*}
where  $\store_0$ is the empty store that maps each variable to the empty word $\epsilon \in \Dom$.

\noindent
A strongly terminating multi-threaded program $M$ is running in polynomial time if 
there is a polynomial $Q$ such that for all $\da_1,\ldots,\da_n \in \Dom$,
$\temps{M}{\da_1,\ldots,\da_n} \leq Q(\max_{i=1,n}\taille{\da_i})$.
Observe that, in the above definition, the time consumption of an operator is considered as constant, which is fair if operators are supposed to be computable in polynomial time.


\subsection{Type system}\label{sec:TypeSystem}
Atomic types are elements of the boolean lattice $(\{\tiera,\tierb\},\ord,\tiera,\join,\meet)$ where  $\tiera \ord \tierb$.
We call them \emph{tiers} accordingly to the data ramification principle of~\cite{Leivant94}. 
We use  $\sla,\slb,\ldots$ for tiers.
 A variable typing environment $\typenv$ is a finite mapping from $\Var$ to $\SL$, which assigns a single tier to each variable.
An operator typing environment $\typop$ is a mapping that associates to each operator $\op$ a set of operator types $\typop(\op)$, where the operator types corresponding to an operator of arity $n$ are of the shape $\sla_1 \to \ldots \sla_n \to \sla$ with $\sla_i,\sla \in \SL$ using implicit right associativity of $\sra$.
We write $\dom(\typenv)$ (resp. $\dom(\typop)$) to denote the set of variables typed by $\typenv$ (resp. the set of operators typed by $\typop$).
Figure~\ref{fig:TypeCom} gives the typing discipline for expressions, commands and multi-threaded programs.
Given a multi-threaded program $M$, a variable typing environment $\typenv$ and an operator typing environment $\typop$, 
$M$ is \emph{well-typed}
if for every $\threada \in dom(M)$, $\typenv, \typop \imp M(\threada) : \tya$ for some tier $\tya$.

\begin{figure}[t]
\hrulefill\\
$$
\begin{array}{c}
\infer{ \typenv(\xa)=\tya}{\typenv,\typop \imp \xa:\tya}
 \quad  \quad \infer{\typenv, \typop \imp \xa: \tyb  \qquad \typenv, \typop \imp \ea:\tya} 
{\typenv, \typop \imp \xa \iasg \ea:\tyb}
{\tyb \ord \tya}
\\[15pt]
\infer{\typenv,\typop \imp \eaa:\tyaa \ldots  \typenv,\typop \imp \ean:\tyan \qquad {\tyaa} \sra \ldots \sra {\tyan} \sra {\tya} \in \typop(\op)}
{\typenv,\typop \imp \op(\eaa,\ldots,\ean):\tya }
\\[15pt]
\infer{\typenv, \typop \imp \ea:\tierb \qquad \typenv, \typop \imp \ca:\tya} 
{\typenv, \typop \imp \iwhile (\ea) \{\ca\}:\tierb}
\quad 
 \quad \infer{\typenv, \typop \imp \ca:\tya \qquad \typenv, \typop \imp \cb:\tyb} 
{\typenv, \typop \imp \ca \isep \cb: \tya \join \tyb}
\\[15pt]
\infer{} 
{\typenv, \typop \imp \iskip:\tya}
\quad \quad  \infer{\typenv, \typop \imp \ea:\tya \qquad \typenv, \typop \imp \ca: \tya \qquad \typenv, \typop \imp \cb:\tya} 
{\typenv, \typop \imp \iif \ea \ithen \ca \ielse \cb:\tya}
\end{array}
$$
\caption{Type system for expressions, commands\label{fig:TypeCom}}
\hrulefill
\end{figure}

Notice that the subject reduction property is not valid, because we don't explicitly have any subtyping rule.
However, a weak subject reduction property holds: If $\store \Imp \ca \Csmt{s} \store' \Imp \ca'$ then $\typenv,\typop \imp \ca':\tyb$ where $\tyb \ord \tya$.

\section{Safe multi-threaded program} \label{hector}
\subsection{Neutral and positive operators} 
As in \cite{MarionLICS}, we define two classes of operators called neutral and positive.
For this, let $\dord$ be the sub-word relation over $\Dom$, which is defined by $\dv \dord \dw$, iff there are $\du$ and $\du'$ such that $\dw = \du.\dv.\du'$, where $.$ is the concatenation.

\noindent
An operator $\op$ is \emph{neutral} if: 
\begin{enumerate}
\item either $\sem{\op}:\Dom \to \{{\vrai},{\faux}\}$ is a predicate;
\item or  for all $\da_1,\ldots,\da_n \in \Dom,\ \exists i \in \{1,\ldots,n\}$,
$
\sem{\op}(\da_1,\ldots,\da_n) \dord {\da_i}
$.
\end{enumerate}
An operator $\op$  is \emph{positive} if there is a constant $\cst{op}$ such that:
\begin{align*}
\taille{\sem{\op}(\da_1,\ldots,\da_n)} & \leq \max_i \taille{\da_i} + \cst{op}
\end{align*}

A neutral operator is always a positive operator but the converse is not true.
In the remainder, we assume that operators are all neutral or positive.

\subsection{Safe environments and safe multi-threaded programs}\label{sec:safeprog}
An operator typing environment $\typop$ is \emph{safe} if for each $ \op \in \dom(\typop)$ of arity $n$ 
and for each  $\tyaa \sra \ldots \sra \tyan \sra \alpha \in \typop(\op)$, we have $\alpha \ord \meet_{i=1,n} \tya_i$,
and if the operator $\op$ is positive but  not neutral, then $\tya=\tiera$.

Now, given $\typenv$ a variable typing environment and $\typop$ a operator typing environment,
we say that $M$ is a \emph{safe multi-threaded program}  if $M$ is well-typed wrt $\typenv$ and $\typop$
and $\typop$ is safe.

Intuitively, a tier $\tiera$ argument is unsafe. This means that it cannot be used as a loop guard. 
So for "loop-safety" reasons, if an operator has a tier $\tiera$ argument then the result is necessarily of tier $\tiera$. 
In return, a positive operator  can increase the size of its arguments.
On the other hand, a neutral operator does not increase the size of its arguments. So, we can apply it safely everywhere.
The combination of the type system, which guarantees some safety properties on the information flow, and of operator specificities 
provides time bounds.

\begin{example}
Given a word $\da$, the operator $\meq{\da}$ tests whether or not its agument begins with the prefix $\da$ and $\mpred{}$ computes the predecessor.
$$
\begin{array}{lll}
\sem{\meq{\da}}(\du) = 
   \begin{cases}
     =\vrai &\text{if }\du=\da \dw  \\
        = \faux &\text{otherwise} 
   \end{cases}                  
& \qquad \qquad \qquad&
\sem{\mpred}(\du)   =
     \begin{cases}
      = \epsilon  &\text{if }\du=\epsilon \\ 
      = \dw  &\text{if\ } \du=\ell.\dw,\ \ell \in \Sigma \\
      \end{cases}    
\end{array}
$$
Both operators are neutral. This means that their types satisfy
$\typop(\mpred),\typop(\meq{u}) \subseteq \{\tiera \sra \tiera, \tierb \sra \tierb, \tierb \sra \tiera \}$
wrt to a safe environment $\typop$. 
The operator $\msuc{\da}$ adds a prefix $\da$. It is positive, but not neutral.
So,  $\typop(\msuc{\da}) \subseteq \{\tierb \sra \tiera , \tiera \to \tiera \}$:
$$
\begin{array}{llllll}
\textit{(Positive)} & \sem{\msuc{\da}}(\db)&=\da.\db && \da \in \Sigma
\end{array}
$$
\end{example}

\section{Sequential and concurrent non-interferences}\label{sec:NI}
In this section, we demonstrate that classical non-interference results are obtained through the use of the considered type system. For that purpose, we introduce some intermediate lemmata.
The confinement Lemma expresses the fact that no tier $\tierb$ variables are modified by a command of tier $\tiera$.

\begin{lemma}[Confinement]\label{lem:confinement}
Let $\typenv$ be a variable typing environment and $\typop$ be a safe operator typing environment. 
If $\typenv, \typop \imp \ca:\tiera$,
then every variable assigned to in $\ca$ is of type $\tiera$,
and $\ca$ does not contain while loops.
\end{lemma}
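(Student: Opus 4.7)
The plan is a straightforward structural induction on the typing derivation of $\typenv, \typop \imp \ca : \tiera$ (equivalently, on the command $\ca$), exploiting the fact that $\tiera$ is the bottom of the two-point lattice $\{\tiera,\tierb\}$ so that no rule can assign type $\tiera$ to a construct whose typing rule forces tier $\tierb$.

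The base cases are immediate. For $\ca = \iskip$, there is no assignment and no while loop. For $\ca = \xa \iasg \ea$, the only typing rule yields the type $\typenv(\xa)$, so if the derivation concludes $\typenv, \typop \imp \xa \iasg \ea : \tiera$ then $\typenv(\xa) = \tiera$, and there is of course no while loop inside.

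For the inductive cases, the sequential composition $\ca_1 \isep \ca_2$ is typed $\tya_1 \join \tya_2$; in the lattice $\{\tiera,\tierb\}$ with $\tiera \ord \tierb$, this join equals $\tiera$ only when $\tya_1 = \tya_2 = \tiera$, so the induction hypothesis applies to both subcommands and we conclude. The conditional $\iif \ea \ithen \ca \ielse \cb$ is typed with the common tier of its branches, so a derivation at tier $\tiera$ forces $\ca$ and $\cb$ to be typed at $\tiera$, and the induction hypothesis again gives the result.

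The case of the while loop is the one I would single out, and it is in fact where the argument is sharpest: the rule for $\iwhile(\ea)\{\ca\}$ fixes the conclusion tier to $\tierb$ regardless of the tier of $\ca$, and since the system has no subtyping/subsumption rule (the paper explicitly notes subject reduction fails for this reason), a while loop can never be assigned tier $\tiera$. Hence this case is vacuous, which simultaneously yields both conclusions: a command typed $\tiera$ contains no while loops, and every variable it assigns to has tier $\tiera$. The only subtle point to double-check is that no implicit weakening from $\tierb$ to $\tiera$ slips in elsewhere (e.g., through the operator typing), but safeness of $\typop$ and the purely syntactic assignment rule preclude that, so the induction goes through without obstacle.
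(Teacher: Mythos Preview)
Your proposal is correct and follows exactly the approach the paper indicates: a structural induction on $\ca$. The paper's own proof is a single line (``By induction on the structure of $\ca$''), and you have correctly supplied the case analysis it omits. One small remark: your final comment about needing safeness of $\typop$ to rule out implicit weakening is unnecessary---the command typing rules in Figure~\ref{fig:TypeCom} are purely syntax-directed and never consult $\typop$ for the tier of a command, so the argument goes through even without the safeness hypothesis (the lemma merely inherits that hypothesis from its intended use later).
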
  

\begin{proof}
By induction on the structure of $\ca$.
\qed
\end{proof}

The following lemma, called simple security, says that only variables at level $\tierb$ will have their content read in order to evaluate an expression $\ea$ of type $\tierb$.

\begin{lemma}[Simple security]\label{lem:simpSec}
Let $\typenv$ be a variable typing environment and $\typop$ be a safe operator typing environment. 
If $\typenv, \typop \imp \ea:\tierb$,
then for every $\xa \in \FV(\ea)$, we have $\typenv(\xa) = \tierb$.
Moreover, all operators in $\ea$ are neutral. 
\end{lemma}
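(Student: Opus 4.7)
The plan is to prove the Simple Security Lemma by straightforward structural induction on the expression $\ea$, relying crucially on the safety condition of $\typop$ to propagate the tier $\tierb$ constraint downward through subexpressions and to rule out non-neutral operators.

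First I would dispatch the base case where $\ea = \xa$ is a variable. The only typing rule that applies forces $\typenv(\xa) = \tierb$, which is precisely the required conclusion; the ``moreover'' clause about neutral operators holds vacuously since there are no operators in $\ea$.

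Next I would handle the inductive case $\ea = \op(\ea_1, \ldots, \ea_n)$. The typing derivation ends with the operator rule, so there exist tiers $\tya_1, \ldots, \tya_n$ with $\typenv, \typop \imp \ea_i : \tya_i$ and $\tya_1 \sra \ldots \sra \tya_n \sra \tierb \in \typop(\op)$. Since $\typop$ is safe, the conclusion tier $\tierb$ must satisfy $\tierb \ord \meet_{i=1,n} \tya_i$. Because $\tierb$ is the top of the boolean lattice $\{\tiera, \tierb\}$, this forces $\meet_i \tya_i = \tierb$, hence $\tya_i = \tierb$ for every $i$. Applying the induction hypothesis to each $\ea_i$ gives that every free variable of $\ea_i$ is mapped to $\tierb$ by $\typenv$ and that all operators appearing inside $\ea_i$ are neutral. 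Since $\FV(\ea) = \bigcup_i \FV(\ea_i)$, the first conclusion follows. For the second, it remains to observe that $\op$ itself is neutral: the safety clause states that if $\op$ is positive but not neutral then its output tier must be $\tiera$; since here the output tier is $\tierb$, the operator cannot be positive-but-not-neutral, so by the blanket assumption that all operators are neutral or positive, $\op$ is neutral.

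I do not expect any serious obstacle; the main subtlety is being explicit about which part of the safety condition is used where, namely (i) the inequality $\alpha \ord \meet_i \tya_i$ to push the tier $\tierb$ constraint to every argument, and (ii) the positive/neutral dichotomy to guarantee neutrality of the top-level operator. Both uses combine cleanly with the induction hypothesis to close the proof.
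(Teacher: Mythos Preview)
Your proposal is correct and follows essentially the same approach as the paper: structural induction on $\ea$, using the safety condition on $\typop$ to force every operator occurring in $\ea$ to have type $\tierb \sra \ldots \sra \tierb \sra \tierb$, which simultaneously propagates tier $\tierb$ to all subexpressions (hence to all free variables) and rules out positive-but-not-neutral operators. The paper's proof is stated more tersely, but your expanded version matches it step for step.
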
  

\begin{proof}
By induction on $\ea$, and using the fact that $\ea$ is necessarily only composed of operators of type $\tierb \sra \ldots \sra \tierb \sra \tierb$,
because  the environment is safe.
\qed
\end{proof}

\begin{definition}
Let $\typenv$ be a variable typing environment and $\typop$ be an operator typing environment. 
\begin{itemize}
\item The equivalence relation $\approx_{\typenv,\typop}$ on stores is defined as follows: \\
 $\store \approx_{\typenv,\typop} \storb$ iff for every $\xa \in \dom(\Gamma)$ s.t. $\typenv(\xa) = \tierb$ we have $\store(\xa)=\storb(\xa)$
\item The relation $\approx_{\typenv,\typop}$ is extended to commands as follows:
\begin{enumerate}
\item If $\ca=\cb$ then $\ca \approx_{\typenv,\typop} \cb$
\item If $\typenv, \typop \imp \ca:\tiera$ and $\typenv, \typop \imp \cb:\tiera$ then $\ca \approx_{\typenv,\typop} \cb$
\item If $\ca \approx_{\typenv,\typop} \cb$ and $\cc \approx_{\typenv,\typop} \cd$ then $\ca;\cc\approx_{\typenv,\typop}  \cb ;\cd$ 
\end{enumerate}
\item Finally, it is extended to configurations as follows: \\
 If $\ca \approx_{\typenv,\typop} \cb$ and $\store \approx_{\typenv,\typop} \storb$ then $\store \Imp \ca   \approx_{\typenv,\typop} \storb \Imp \cb$
\end{itemize}
\end{definition}

\begin{remark}\label{rem:SimpleSecurity}
A consequence of Lemma~\ref{lem:simpSec} is that if $\store \approx_{\typenv,\typop} \storb$  and if $\typenv, \typop \imp \ea:\tierb$,
then computations of $\ea$ are identical under the stores $\store$ and $\storb$ , that is $\store \Imp \ea \Csmt{e} \da$ and $\storb \Imp \ea \Csmt{e} \da$.
\end{remark}

We now establish a sequential non-interference Theorem which states that if $\xa$ is variable of tier $\tierb$ then the value stored in $\xa$ is independent from variables of tier $\tiera$.

\begin{theorem}[Sequential non-interference] \label{thm:soundness}
Assume that $\typenv$ is a variable typing environment and $\typop$ is a safe operator typing environment
s.t.  $\typenv,\typop \imp \ca:\tya$ and $\typenv,\typop \imp \cc:\tya$. 
Assume also that $\store \Imp \ca \approx_{\typenv,\typop} \storb \Imp \cc$. Then, we have:
\begin{itemize}
\item  if $\store \Imp \ca \Csmt{s} \store' \Imp \ca'$ then there exists $\storb'$  and $\cc'$ 
such that $\storb \Imp \cc \Csmt{s}^* \storb' \Imp \cc'$ 
 and $\store' \Imp \ca' \approx_{\typenv,\typop} \storb' \Imp \cc'$,
\item if $\store \Imp \ca \Csmt{s} \store'$ then there exists $\storb'$  such that $\storb \Imp \cc \Csmt{s}^* \storb'$ 
and $\store' \approx_{\typenv,\typop} \storb'$
\end{itemize}
\end{theorem}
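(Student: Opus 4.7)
The plan is to argue by induction on the structure of $\ca$, performing case analysis on which of the three clauses of the command-level definition of $\approx_{\typenv,\typop}$ justifies $\ca \approx_{\typenv,\typop} \cc$, together with case analysis on which small-step rule of Figure~\ref{fig:Com} fires for $\store \Imp \ca$. The ingredients are already at hand: the Confinement Lemma (which forbids while loops and tier $\tierb$ assignments inside a tier $\tiera$ command), the Simple Security Lemma together with Remark~\ref{rem:SimpleSecurity} (which yields identical evaluation of tier $\tierb$ expressions under $\approx_{\typenv,\typop}$-equivalent stores), and the weak subject reduction property stated just after Figure~\ref{fig:TypeCom}.

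In the clause $\ca = \cc$, we split on the head constructor. For $\iskip$ and assignment $\xa \iasg \ea$, $\cc$ matches with a single step: if $\typenv(\xa) = \tierb$ the typing rule forces $\ea$ of tier $\tierb$, so Remark~\ref{rem:SimpleSecurity} delivers the same value on both sides and the update preserves equivalence of stores, while if $\typenv(\xa) = \tiera$ only a tier $\tiera$ cell is touched in each store. For $\iif \ea \ithen \ca_1 \ielse \ca_2$, when the conditional has tier $\tierb$ the guard has tier $\tierb$ and Simple Security guarantees that the same branch is selected on both sides; when the conditional has tier $\tiera$ both branches have tier $\tiera$, so the two possibly distinct selected sub-commands are equivalent via the tier $\tiera$ clause of the command equivalence. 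The $\iwhile$ case behaves like the tier $\tierb$ conditional since its typing rule forces the guard to have tier $\tierb$, and both sides unfold or stop together. Sequential composition $\ca_1; \ca_2$ is handled by the induction hypothesis applied to $\ca_1$.

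In the clause where $\ca$ and $\cc$ both have tier $\tiera$, Confinement is decisive: every reachable store of $\ca$ (or of $\cc$) coincides with the original one on tier $\tierb$ variables. For the non-terminating bullet one takes the empty reduction on the right, setting $\storb' = \storb$ and $\cc' = \cc$, and combines weak subject reduction with the tier $\tiera$ clause to relate $\ca'$ to $\cc$. For the terminating bullet, the absence of while loops inside $\cc$ guarantees a full reduction in finitely many $\Csmt{s}$-steps to some $\storb'$, and only tier $\tiera$ variables are modified along the way, hence $\store' \approx_{\typenv,\typop} \storb'$. For the sequential clause with $\ca = \ca_1;\ca_2$ and $\cc = \cc_1;\cc_2$ (with $\ca_i \approx_{\typenv,\typop} \cc_i$), one inspects whether the step of $\ca$ terminates $\ca_1$ or not, invokes the induction hypothesis on $\ca_1$ against $\cc_1$, and repackages the resulting configurations with $\ca_2$ and $\cc_2$ on the right. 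The main obstacle is pinpointing the places where the equivalence must \emph{switch clauses}, most notably inside a tier $\tiera$ conditional under the $\ca = \cc$ clause, where distinct branches may be selected and the analysis must transition into the tier $\tiera$ equivalence clause; once this transition is identified, the remaining cases follow routinely from Confinement and Simple Security.
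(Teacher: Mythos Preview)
Your proposal is correct and follows essentially the same approach as the paper: both invoke Confinement for the tier-$\tiera$ situation and Simple Security (via Remark~\ref{rem:SimpleSecurity}) for tier-$\tierb$ guards, with structural induction carrying the sequential case. The paper organizes the argument by first splitting on the tier $\alpha$ (dispatching $\alpha=\tiera$ in one stroke via Confinement, then inducting on $\ca$ for $\alpha=\tierb$), whereas you split on the clause of $\approx_{\typenv,\typop}$; these are equivalent decompositions, and your version is in fact more explicit than the paper's terse sketch, particularly regarding the tier-$\tiera$ conditional where the equivalence must switch clauses.
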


\begin{proof}
First suppose that $\alpha = \tiera$. 
Confinement Lemma~\ref{lem:confinement} implies that $\store' \approx_{\typenv,\typop} \storb'$ since no tier $\tierb$ variable is changed. 
Second suppose that $\alpha=\tierb$. We proceed by induction on $\ca$.
Suppose that $\ca$ is $\iwhile (\ea) \{\ca_1\}$ and the evaluation under $\store$ is:
$$
\infer{\store \Imp \ea \Csmt{e} \vrai }
{\store \Imp  \iwhile (\ea) \{\ca_1\} \ \Csmt{s}\ \store \Imp \ca_1 ;  \iwhile (\ea) \{\ca_1\}}\quad \textit{(W}_{\vrai}\textit{)}
$$
By Remark~\ref{rem:SimpleSecurity}, the evaluation of $\ea$ under $\storb$ is necessarily $\vrai$. 
Since $\ca$ is an atomic command, $\ca \approx_{\typenv,\typop}  \cc$ implies $\ca=\cc$. 
As a result, $\storb \Imp  \iwhile (\ea) \{\ca_1\} \ \Csmt{s}\ \storb \Imp \ca_1 ;  \iwhile (\ea) \{\ca_1\}$.
We have $\store' \approx_{\typenv,\typop} \storb'$ because $\store = \store'$ and $\storb = \storb'$.
We conclude that both configurations are equivalent, that is $\store' \Imp \ca' \approx_{\typenv,\typop} \storb' \Imp \cc'$.
The other cases  are treated similarly.\qed
\end{proof}

Sequential non-interference can be adapted to multi-threaded programs.
For that purpose, we extend the equivalence $\approx_{\typenv,\typop} $ to multi-threaded programs by:
\begin{itemize}
\item If $\forall x \in dom(M)=dom(M'),\ M(x) \approx_{\typenv,\typop} M'(x)$ then $M \approx_{\typenv,\typop} M'$
\item If $M \approx_{\typenv,\typop} M'$ and $\store \approx_{\typenv,\typop} \storb$ then $\store \Imp M  \approx_{\typenv,\typop} \storb \Imp M'$
\end{itemize}

\begin{theorem}[Concurrent Non-interference] \label{CNI}
Assume that $\typenv$ is a variable typing environment, that $\typop$ is a safe operator typing environment such that  
$M$ is well-typed. 
Assume also that $\store \Imp M_1 \approx_{\typenv,\typop} \storb \Imp M_2$.
Then, if $\store \Imp M_1 \Csmt{g} \store' \Imp M'_1$ then there are $\storb'$ and $M'_2$ s.t. $\storb \Imp M_2 \Csmt{g}^* \storb' \Imp M'_2$ and $\store' \Imp M_1 \approx_{\typenv,\typop} \storb' \Imp M_2 $.
\end{theorem}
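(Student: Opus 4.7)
The proof proceeds by case analysis on which of the two global-step rules, \textit{(Step)} or \textit{(Stop)}, derives the single step $\store \Imp M_1 \Csmt{g} \store' \Imp M_1'$. Suppose \textit{(Step)} fires on thread $\threada$, with $M_1(\threada) = \ca$ and $\store \Imp \ca \Csmt{s} \store'\Imp \ca_1$, so that $M_1' = M_1[\threada := \ca_1]$. From $\store \Imp M_1 \approx_{\typenv,\typop} \storb \Imp M_2$ we extract both $\store \approx_{\typenv,\typop} \storb$ and, for every identifier $\threadb \in \dom(M_1) = \dom(M_2)$, $M_1(\threadb) \approx_{\typenv,\typop} M_2(\threadb)$. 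In particular, writing $\cc = M_2(\threada)$, we obtain $\store \Imp \ca \approx_{\typenv,\typop} \storb \Imp \cc$. Since $M$ is well-typed there is a tier $\tya$ with $\typenv,\typop \imp \ca : \tya$; and the equivalence on commands combined with well-typedness of $M_2$ forces $\typenv,\typop \imp \cc : \tya$ (indeed, by the definition of $\approx_{\typenv,\typop}$ two $\approx$-equivalent commands are either syntactically equal or both of tier $\tiera$, and the structural clause propagates this to sequential compositions).

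Applying Theorem~\ref{thm:soundness} to the thread $\threada$, we obtain $\storb'$ and $\cc'$ with $\storb \Imp \cc \Csmt{s}^* \storb' \Imp \cc'$ and $\store' \Imp \ca_1 \approx_{\typenv,\typop} \storb' \Imp \cc'$. Each atomic step $\storb_i \Imp \cc_i \Csmt{s} \storb_{i+1} \Imp \cc_{i+1}$ of this reduction is lifted by rule \textit{(Step)} on the same thread $\threada$ into a global step $\storb_i \Imp M_2[\threada := \cc_i] \Csmt{g} \storb_{i+1} \Imp M_2[\threada := \cc_{i+1}]$. Concatenating these lifts yields $\storb \Imp M_2 \Csmt{g}^* \storb' \Imp M_2'$ with $M_2' = M_2[\threada := \cc']$.

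It remains to check that $\store' \Imp M_1' \approx_{\typenv,\typop} \storb' \Imp M_2'$. Store equivalence $\store' \approx_{\typenv,\typop} \storb'$ is exactly the conclusion of Theorem~\ref{thm:soundness}; on thread $\threada$ we have $\ca_1 \approx_{\typenv,\typop} \cc'$ by the same theorem; and every other thread $\threadb \neq \threada$ is untouched on both sides, so $M_1'(\threadb) = M_1(\threadb) \approx_{\typenv,\typop} M_2(\threadb) = M_2'(\threadb)$ by the initial hypothesis. The \textit{(Stop)} case is entirely analogous, using the second clause of Theorem~\ref{thm:soundness} (no residual command) and closing the $M_2$-side simulation with a single \textit{(Stop)} step once $\cc$ has been reduced to a terminal store.

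The only subtle point, and the one that has to be handled with care, is that a \emph{single} global step on the $M_1$-side may require \emph{several} global steps on the $M_2$-side, because Theorem~\ref{thm:soundness} only guarantees $\Csmt{s}^{*}$ on the $\cc$-side when $\ca$ and $\cc$ are merely $\approx_{\typenv,\typop}$-equivalent of tier $\tiera$ (e.g.\ two different branches of a conditional). This is exactly the reason the statement concludes with $\Csmt{g}^{*}$ rather than $\Csmt{g}$; the non-determinism of the scheduler is crucial here, since it allows us to keep selecting the same thread $\threada$ throughout the simulated sequence, guaranteeing that no interference from the other threads can break the invariant during the intermediate configurations.
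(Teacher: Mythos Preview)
Your proof is correct and follows the same approach as the paper, which merely states ``Consequence of Theorem~\ref{thm:soundness}''; you have spelled out in detail the case analysis and the lifting of sequential steps to global steps that this one-line proof leaves implicit.
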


\begin{proof}
Consequence of Theorem~\ref{thm:soundness}.
\qed
\end{proof}

\section{Sequential and concurrent temporal non-interferences}\label{sec:TNI}
Now we establish a property named temporal non-interference. 
This property ensures that the length of while-loops does not depend on variables of tier $\tiera$, and depends only on tier $\tierb$ variables. 
Consequently, a change in the value of a variable of tier $\tiera$ does not affect loop lengths.

For this, we define a loop length measure in Figure~\ref{fig:timeCom} based on the small step semantics of Figure~\ref{fig:Com}.
$\storb \Imp_0 \ca \Csmt{s}^* \storb' \Imp_{t} \ca'$ holds if $t$ is the number of while-loops, which are unfolded to reach $\storb' \Imp \ca'$ from $\storb \Imp \ca$, that is $t$ is the number of applications of the rule $\textit{(TW}_{\vrai}\textit{)}$ in a computation.
It is convenient to define the relation $\Cmpt{t}$ by
$\storb \Imp \ca \Cmpt{t} \storb' \Imp \ca'$ iff $\storb \Imp_0 \ca \Csmt{s}^* \storb' \Imp_{t} \ca'$.

\begin{figure*}[t]
\hrulefill\\
$$
\begin{array}{c} 
\infer{\store \Imp \ea \Csmt{e} \da }
{\store \Imp_t \xa \iasg \ea \Csmt{s} \store[\xa \leftarrow \da]}
\quad 
\quad
\infer{\phantom{\store \Imp \ea \Csmt{e} \da } }{\store \Imp_t \iskip \Csmt{s} \store} 
\quad
\quad
 \infer{\store \Imp_t \ca_1 \Csmt{s} \store_1}  
 {\store \Imp_t  \ca_1 \isep \ca_2 \Csmt{s} \store_1 \Imp_t \ca_2} 
 \\[15pt]
 \infer{\store \Imp_t \ca_1 \Csmt{s}  \store_1 \Imp_{t'} \ca_1' } 
{\store \Imp_t \ca_1 \isep \ca_2 \Csmt{s} \store_1 \Imp_{t'} \ca_1' ; \ca_2}
\quad
\quad
\infer{\store \Imp \ea \Csmt{e} w,\ w\in \{\vrai,\faux\}} 
{\store \Imp_{t} \iif \ea \ithen \ca_{\vrai} \ielse \ca_{\faux} \ \Csmt{s} \ \store \Imp_t \ca_{w}} 
\\[15pt]
\infer{\store \Imp \ea \Csmt{e} \faux} 
{\store \Imp_t \iwhile (\ea) \{\ca\} \ \Csmt{s}  \ \store}
\quad 
\quad
\infer{\store \Imp \ea \Csmt{e} \vrai }
{\store \Imp_t  \iwhile (\ea) \{\ca\} \ \Csmt{s} \ \store \Imp_{t+1} \ca ;  \iwhile (\ea) \{\ca\}}\ \textit{(TW$_{\vrai}$)}
\\[15pt]
\infer{M(\threada)= \ca \quad \store \Imp_0 \ca \Csmt{s} \store'  }
{\store \Imp_t M \ \Csmt{g} \ \store' \Imp_t M-\threada}
\quad 
\quad
\infer{M(\threada)= \ca \quad \store \Imp_t \ca \Csmt{s} \store' \Imp_{t'} \ca' }
{\store \Imp_t  M \ \Csmt{g} \ \store' \Imp_{t'} M[\threada:=\ca']}
\end{array}
$$
\caption{Loop length measure for commands and multi-thread programs}
\label{fig:timeCom}
\hrulefill
\end{figure*}

\begin{remark}\label{decoupe}
If $\typenv,\typop \imp \ca:\tiera$ and $\storb \Imp \ca \Csmt{s}^* \storb' \Imp \ca'$ then $\storb \Imp \ca \Cmpt{0} \storb' \Imp \ca'$ since there is no while loop inside $\ca$,
by Lemma~\ref{lem:confinement}.
Moreover, if $\storb \Imp \ca \Cmpt{t} \storb' \Imp \ca'$, then for every $k \leq t$ there are $\storb''$ and $\ca''$ such that 
$\storb \Imp \ca \Cmpt{k} \storb'' \Imp \ca'' \Cmpt{t-k} \storb' \Imp \ca' $.
\end{remark}

\begin{theorem}[Temporal non-interference] \label{thm:TNI}
Assume that $\typenv$ is a variable typing environment and $\typop$ is a safe operator typing environment 
s.t. $\typenv,\typop \imp \ca:\tya$ and $\typenv,\typop \imp \cc:\tya$. 
Assume also that $\store \Imp \ca \approx_{\typenv,\typop} \storb \Imp \cc$.
 Then, if $\store \Imp \ca \Cmpt{t} \store' \Imp \ca'$ then there are $\storb'$  and $\cc'$ s.t. $\storb \Imp \cc \Cmpt{t} \storb' \Imp \cc'$ and $\store' \Imp \ca'  \approx_{\typenv,\typop} \storb' \Imp \cc'$.
\end{theorem}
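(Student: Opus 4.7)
The strategy is to mirror the proof of Theorem~\ref{thm:soundness} while carefully tracking the number of $\textit{(TW}_{\vrai}\textit{)}$ unfoldings on both sides. I begin by splitting on the tier $\tya$.

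If $\tya = \tiera$, Confinement (Lemma~\ref{lem:confinement}) implies that $\ca$ contains no while loop, so the rule $\textit{(TW}_{\vrai}\textit{)}$ never fires, forcing $t = 0$. I then pick $\storb' = \storb$ and $\cc' = \cc$, which gives the reflexive derivation $\storb \Imp \cc \Cmpt{0} \storb \Imp \cc$. Confinement further guarantees that no tier-$\tierb$ variable is modified by $\ca$, hence $\store' \approx_{\typenv,\typop} \store \approx_{\typenv,\typop} \storb$; weak subject reduction keeps $\ca'$ at tier $\tiera$, giving $\ca' \approx_{\typenv,\typop} \cc$ via the tier-$\tiera$ clause of the equivalence.

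If $\tya = \tierb$, I proceed by induction on the length $n$ of the reduction sequence underlying $\store \Imp \ca \Cmpt{t} \store' \Imp \ca'$. The base case $n = 0$ (hence $t = 0$) is immediate. For the inductive step I consider the first small-step on the $\ca$ side and run the structural case analysis on $\ca$ used in the proof of Theorem~\ref{thm:soundness}, with the shape of $\cc$ dictated by the equivalence $\ca \approx_{\typenv,\typop} \cc$. The critical case is $\ca = \iwhile(\ea)\{\cb\}$: being atomic of tier $\tierb$, the equivalence forces $\cc = \ca$; the guard $\ea$ has tier $\tierb$, so by Simple Security (Lemma~\ref{lem:simpSec}) and Remark~\ref{rem:SimpleSecurity} it evaluates to the same boolean under $\store$ and $\storb$, whence $\textit{(TW}_{\vrai}\textit{)}$ fires synchronously on both sides and increments both counters by one. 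The remaining cases (skip, assignment, conditional, composition) proceed exactly as for Theorem~\ref{thm:soundness}, involve no $\textit{(TW}_{\vrai}\textit{)}$ step, and in the composition case recursively apply the inductive hypothesis on the subcommand being reduced. Remark~\ref{decoupe} is then used to split the derivation and invoke the outer inductive hypothesis on the residual.

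The main obstacle is the composition case $\ca = \ca_1;\ca_2$ where the reduced subcommand $\ca_1$ has tier $\tiera$: Sequential Non-interference only provides a multi-step matching segment on the $\cc$ side, and one must verify that this segment contributes zero $\textit{(TW}_{\vrai}\textit{)}$ steps in order to stay in lockstep with the tier-$\tiera$ step on the $\ca$ side. This is an immediate corollary of Confinement applied to both $\ca_1$ and its $\approx_{\typenv,\typop}$-partner on the $\cc$ side, which are both tier $\tiera$ and hence contain no while loop. Once this step-by-step synchronization is established, iterating the construction yields the required matching computation with exactly $t$ applications of $\textit{(TW}_{\vrai}\textit{)}$.
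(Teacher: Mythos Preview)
Your argument is correct, but it is organised differently from the paper's.  The paper proceeds by induction on the loop counter $t$: the base case $t=0$ is discharged wholesale by an appeal to sequential non-interference (Theorem~\ref{thm:soundness}), and the inductive step peels off a single application of $\textit{(TW}_{\vrai}\textit{)}$ by the structural case analysis you describe for the while case, leaving the remaining $t$ unfoldings to the induction hypothesis.  You instead split on the tier first, dispose of the $\tya=\tiera$ branch directly via Confinement with the degenerate witness $(\storb,\cc)$, and for $\tya=\tierb$ run an induction on the \emph{length} of the underlying $\Csmt{s}$-sequence rather than on $t$, redoing the step-by-step bisimulation of Theorem~\ref{thm:soundness} while keeping an explicit tally of $\textit{(TW}_{\vrai}\textit{)}$ applications on each side.

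Both decompositions are sound.  The paper's route is shorter because Theorem~\ref{thm:soundness} absorbs all the non-loop bookkeeping in the $t=0$ base case; your route is more self-contained and makes the lockstep firing of $\textit{(TW}_{\vrai}\textit{)}$ explicit, at the price of re-proving fragments of Theorem~\ref{thm:soundness}.  Two minor remarks on your write-up: your invocation of Remark~\ref{decoupe} is not really needed once you induct on the reduction length (you only ever peel off one $\Csmt{s}$-step, not a $\Cmpt{k}$-prefix), and the phrase ``recursively apply the inductive hypothesis on the subcommand'' in the composition case is better described as a nested structural recursion down to the leftmost atomic command, after which the outer length-induction hypothesis applies to the residual $n-1$ steps.
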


\begin{proof}
The proof goes by induction on $t$.
Suppose that $t=0$. This means that no rule \textit{(TW$_{\vrai}$)}  has been fired. 
The conclusion is a consequence of sequential non-interference Theorem~\ref{thm:soundness}.

Next, suppose that $\store \Imp \ca \Cmpt{t+1} \store' \Imp \ca'$. This means that a rule \textit{(TW$_{\vrai}$)} has been applied.
So suppose that $\ca =  \iwhile (\ea) \{\ca_1\}$ and that $\store \Imp \ea  \Csmt{e}  \vrai$. 
First, $\store \approx_{\typenv,\typop} \storb$ and Lemma~\ref{lem:simpSec} imply that $\storb \Imp \ea  \Csmt{e}  \vrai$.
Second,  since $\ca \approx_{\typenv,\typop} \cc$, we have $\ca = \cc$, by definition of $\approx_{\typenv,\typop}$.
Since $\ca' = \ca_1;\ca$, we have $\cc'=\ca_1;\ca$. 
Thus, $\ca'   \approx_{\typenv,\typop} \cc'$ and  $\storb\Imp \cc \Cmpt{t+1} \storb' \Imp \cc'$ hold. 
Moreover, we have $\store'=\store$ and $\storb'=\storb$, which implies that $\store' \approx_{\typenv,\typop} \storb'$.
We conclude that  $\store' \Imp \ca'  \approx_{\typenv,\typop} \storb' \Imp \cc'$.\\
The other cases are similar.
\qed
\end{proof}

We extend the relation $\Cmpt{t}$ as follows:
$\store \Imp  M \Cmpt{t} \store' \Imp M'$ if and only if 
$\store \Imp_0  M \ \Csmt{g}^*\ \store' \Imp_{t} M'$. As a corollary, we obtain a temporal non-interference result for multi-threaded programs. 

\begin{theorem}[Concurrent temporal non-interference] \label{CTNI}
Assume $\typenv$ is a variable typing environment and $\typop$ is a safe operator typing environment s.t. 
$M$ and $N$ are well typed. 
Assume that $\store \Imp M \approx_{\typenv,\typop} \storb \Imp N$.
Then, if $\store \Imp M\Cmpt{t} \store' \Imp M'$ then there are $\storb'$ and $N'$
s.t. $\storb \Imp N \Cmpt{t} \storb' \Imp N'$ and $\store' \Imp M' \approx_{\typenv,\typop} \storb' \Imp N'$.
\end{theorem}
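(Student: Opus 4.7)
The plan is to prove the theorem by induction on the number $k$ of global reduction steps in $\store \Imp_0 M \Csmt{g}^k \store' \Imp_t M'$ underlying the assumed derivation $\store \Imp M \Cmpt{t} \store' \Imp M'$. The base case $k = 0$ is immediate: necessarily $t = 0$, $\store = \store'$, $M = M'$, so $\storb' = \storb$ and $N' = N$ work.

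For the inductive step, decompose the reduction as
$\store \Imp_0 M \Csmt{g} \store_1 \Imp_{t_1} M_1 \Csmt{g}^{k-1} \store' \Imp_t M'$, where $t_1 \in \{0,1\}$. The first global step picks some thread $\threada \in \dom(M)$ and applies either \textit{(Step)} or \textit{(Stop)} through a single sequential reduction of $\ca = M(\threada)$. Since $M \approx_{\typenv,\typop} N$ and $\store \approx_{\typenv,\typop} \storb$, we have $N(\threada) = \cc$ with $\ca \approx_{\typenv,\typop} \cc$; moreover well-typedness of both $M$ and $N$ yields tiers $\tya$ with $\typenv,\typop \imp \ca:\tya$ and $\typenv,\typop \imp \cc:\tya$. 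Applying sequential temporal non-interference (Theorem~\ref{thm:TNI}) to the single-step derivation $\store \Imp \ca \Cmpt{t_1} \store_1 \Imp \ca_1$ (or $\store \Imp \ca \Cmpt{0} \store_1$ in the \textit{(Stop)} case) furnishes $\storb_1$ and, when applicable, $\cc_1$ such that $\storb \Imp \cc \Cmpt{t_1} \storb_1 \Imp \cc_1$ (resp.\ $\storb \Imp \cc \Cmpt{t_1} \storb_1$) while preserving the configuration equivalence.

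I then lift this single-thread derivation to a global derivation of $N$ by repeatedly scheduling thread $\threada$: a straightforward subsidiary induction on the length of the sequence $\storb \Imp \cc \Csmt{s}^* \storb_1 \Imp \cc_1$ shows that each sequential step of $\cc$ realises one global step on $N$ that touches only $\threada$, and the temporal counter for global reductions is inherited from the sequential counter by the very definition given in Figure~\ref{fig:timeCom}. Hence $\storb \Imp N \Cmpt{t_1} \storb_1 \Imp N_1$, where $N_1 = N[\threada := \cc_1]$ in the \textit{(Step)} case and $N_1 = N - \threada$ in the \textit{(Stop)} case. In either situation the preserved store and command equivalences combine into $\store_1 \Imp M_1 \approx_{\typenv,\typop} \storb_1 \Imp N_1$.

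The induction hypothesis applied to the remaining derivation $\store_1 \Imp M_1 \Cmpt{t - t_1} \store' \Imp M'$, with matching initial configuration $\storb_1 \Imp N_1$, yields $\storb'$ and $N'$ with $\storb_1 \Imp N_1 \Cmpt{t - t_1} \storb' \Imp N'$ and $\store' \Imp M' \approx_{\typenv,\typop} \storb' \Imp N'$. Concatenating the two derivations gives $\storb \Imp N \Cmpt{t} \storb' \Imp N'$, as required. The main obstacle is the faithful lifting of sequential steps to global steps: it requires checking that sequential non-interference preserves the terminating/non-terminating shape of the first step of $\ca$, so the $\textit{(Step)}$ versus $\textit{(Stop)}$ alternative chosen in $M$ is consistently mirrored in $N$, and that no other thread of $N$ needs to act while we simulate the sequential reduction of $\cc$.
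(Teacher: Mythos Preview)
Your proposal is correct and follows the same route as the paper, deriving the concurrent result from sequential temporal non-interference (Theorem~\ref{thm:TNI}). The paper's own proof is the single line ``Consequence of Theorem~\ref{thm:TNI}'', so you have simply spelled out the induction on global steps and the per-thread lifting that this one-liner leaves implicit.
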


\begin{proof}
Consequence of Theorem~\ref{thm:TNI}.
\qed
\end{proof}

\section{Multi threaded program running time}
An important point is that the number of tier $\tierb$ configurations in a computation is polynomially bounded in the size of tier $\tierb$ initial values.
\begin{lemma}\label{lem:espaceCalcul}
Let $M$ be a safe multi-threaded program wrt environments $\typenv$ and $\typop$.  If $\store \Imp  M \Cmpt{t} \store' \Imp M'$ then $\forall \xa \in \FV(M)$ such that $\typenv(\xa)=\tierb$ either $\store'(\xa) \in \{\emph{\vrai},\emph{\faux}\}$ or $\exists \xb \in \FV(M)$ such that $\typenv(\xb)=\tierb$ and $\store'(\xa)  \dord \store(\xb)$. 
\end{lemma}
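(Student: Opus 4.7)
The plan is to proceed by induction on the length $k$ of the underlying small-step reduction $\store \Imp M \Csmt{g}^k \store' \Imp M'$, of which the relation $\Cmpt{t}$ is simply a reformulation that additionally counts while-unfoldings. The structural observation driving the argument is that a tier $\tierb$ variable $\xa$ can only be mutated by an assignment $\xa \iasg \ea$, and the typing side condition $\tyb \ord \tya$ forces $\ea$ to have tier $\tierb$ whenever $\typenv(\xa)=\tierb$; by the Simple Security Lemma~\ref{lem:simpSec}, such an $\ea$ reads only tier $\tierb$ variables and applies only neutral operators.

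The base case $k=0$ is immediate: $\store'=\store$, and for any tier $\tierb$ variable $\xa \in \FV(M)$ we take $\xb=\xa$, giving $\store'(\xa)=\store(\xa) \dord \store(\xa)$. For the inductive step, write $\store \Imp M \Csmt{g}^k \store_1 \Imp M_1 \Csmt{g} \store' \Imp M'$ and assume the property at $\store_1$. Any variable untouched by the last step inherits it directly. Otherwise, some thread executes an assignment $\xa \iasg \ea$ with $\typenv(\xa)=\tierb$, so $\store' = \store_1[\xa \leftarrow \da]$ where $\store_1 \Imp \ea \Csmt{e} \da$.

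A sub-induction on the structure of $\ea$ establishes that $\da$ is either a boolean or satisfies $\da \dord \store_1(\xc)$ for some tier $\tierb$ variable $\xc \in \FV(\ea) \subseteq \FV(M)$: a variable gives this directly, while for $\ea = \op(\ea_1,\ldots,\ea_n)$ neutrality yields either a boolean output or $\sem{\op}(\da_1,\ldots,\da_n) \dord \da_i$, after which the sub-induction on $\ea_i$ combined with the transitivity of $\dord$ closes the case. Composing this with the outer induction hypothesis applied to $\store_1(\xc)$ and invoking transitivity of $\dord$ once more produces either a boolean or a tier $\tierb$ variable $\xb \in \FV(M)$ with $\store'(\xa) \dord \store(\xb)$, as required.

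The main delicate point is the interaction between the inner induction on expressions and the outer induction on reduction length: the former bounds intermediate values against $\store_1$, while we must transport these bounds back to $\store$ through the outer hypothesis without loss. The only edge case is when the sub-induction reports a boolean $\da_i$ and the next neutral operator produces a strict subword of it; but since $\vrai,\faux \in \Dom$ are fixed constants, the result is either a boolean again or collapses to $\epsilon$, which is a subword of every word, so any tier $\tierb$ variable in $\FV(M)$ (e.g.\ the just-assigned $\xa$ itself) witnesses the desired inequality.
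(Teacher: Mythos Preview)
Your proof is correct and follows the same approach as the paper, which also argues via the assignment typing rule and the Simple Security Lemma that a tier-$\tierb$ update evaluates only neutral operators over tier-$\tierb$ variables; your outer induction on reduction length and the chaining via transitivity of $\dord$ simply make explicit what the paper's one-paragraph sketch (treating a single global step) leaves implicit. One small caveat: your handling of the boolean edge case is not quite right---a strict subword of $\vrai$ or $\faux$ need not lie in $\{\vrai,\faux,\epsilon\}$ when these constants have length greater than one---but the paper glosses over the same point, and it is immaterial for the lemma's only use (bounding the number of reachable tier-$\tierb$ stores), since subwords of fixed constants contribute only finitely many values.
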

\begin{proof}
Take one global computational step $\store \Imp  M \Csmt{\textit{g}} \store' \Imp M'$. 
Let $\xa$ be a variable assigned to in $M(\threada)$, for some thread identifier $\threada$, such that $\typenv(\xa) = \tierb$. 
$X$ can only be assigned to an expression $\ea$ of tier $\tierb$. 
By simple security lemma~\ref{lem:simpSec}, $\ea$ only contains neutral operators.
It means that either $\store'(\xa)$ is a truth value (corresponding to the computation of a predicate) or a subterm of a value of a tier $\tierb$ variable. 
\qed
\end{proof}

In the case where a multi-threaded program strongly terminates (i.e. $M \Downarrow$), we now establish that for all thread interactions, the maximal length of while-loops is polynomially bounded  in the size of tier $\tierb$ values of the initial store. This is a consequence of the temporal non-interference property. For this, define 
$\|\!-\!\|_\tierb$ by $\|\store\|_\tierb =\max_{\typenv(\xa)=\tierb}\taille{\store(\xa)}$. 

\begin{theorem}\label{thm:multiPtime}
Let $M$ be a safe multi-threaded program such that $M \Downarrow$. There is a polynomial $T$ such that for all stores $\store$, if 
$\store \Imp M \Cmpt{t} \store' \Imp M'$
then $t \leq T(\|\store\|_\tierb)$.
\end{theorem}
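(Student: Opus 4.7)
The plan is to derive a polynomial bound on the number of distinct configurations reachable from $\store \Imp M$ modulo the equivalence $\approx_{\typenv,\typop}$, and then argue by a pigeonhole-type argument that the loop count $t$ cannot exceed this number. The three ingredients are Lemma~\ref{lem:espaceCalcul}, concurrent temporal non-interference (Theorem~\ref{CTNI}), and the strong termination hypothesis $M\!\Downarrow$.

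\emph{Step 1 (counting configurations).} I would first observe, via Lemma~\ref{lem:espaceCalcul}, that each tier $\tierb$ variable of $M$ only ever holds a truth value or a subword of an initial tier $\tierb$ value, hence takes at most $O(\|\store\|_\tierb^{2})$ distinct values during any computation. Since the number of tier $\tierb$ variables of $M$ is fixed, the number of distinct tier $\tierb$ projections of the store is polynomially bounded in $\|\store\|_\tierb$. Independently, a routine structural induction on commands shows that the set of residual commands reachable from each thread under $\Csmt{s}^*$ is finite, of size depending only on the initial command; hence the set of multi-threaded programs reachable from $M$ by $\Csmt{g}^*$ is finite, of cardinality $C_M$ depending only on $M$. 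Multiplying the two bounds yields a polynomial $N = P(\|\store\|_\tierb)$ that upper bounds the number of $\approx_{\typenv,\typop}$-equivalence classes of reachable configurations.

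\emph{Step 2 (equivalent configurations carry equal loop counts).} I would next consider the actual computation $\store \Imp M = c_0 \Csmt{g} c_1 \Csmt{g} \cdots \Csmt{g} c_K$ with associated loop counts $0 = t_0 \leq t_1 \leq \cdots \leq t_K = t$, noting that $t_{i+1} - t_i \in \{0,1\}$. The key claim is that $c_i \approx_{\typenv,\typop} c_j$ with $i < j$ implies $t_i = t_j$. Otherwise, with $s = t_j - t_i > 0$ we have $c_i \Cmpt{s} c_j$, and Theorem~\ref{CTNI} applied to $c_i \approx c_j$ produces some $c^{(1)}$ with $c_j \Cmpt{s} c^{(1)}$ and $c_j \approx c^{(1)}$. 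Iterating yields an infinite chain $c_j \Cmpt{s} c^{(1)} \Cmpt{s} c^{(2)} \Cmpt{s} \cdots$ in which each link consumes at least one global step (as $s \geq 1$), giving an infinite reduction from the reachable configuration $c_j$ and contradicting strong termination.

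\emph{Step 3 (conclusion).} Because the loop count starts at $0$ and each global step increments it by $0$ or $1$, the set $\{t_0, \ldots, t_K\}$ equals $\{0, 1, \ldots, t\}$ and has $t+1$ elements. Step 2 forces distinct loop counts to lie in distinct equivalence classes, so $t + 1 \leq N$, and the desired polynomial $T$ can be taken as $T(x) = P(x) - 1$. The hard part will be Step 2: turning a pair of $\approx_{\typenv,\typop}$-equivalent configurations with a positive loop-count gap into a genuinely infinite reduction, which requires iterating concurrent temporal non-interference and then invoking strong termination. A minor subtlety is to ensure that the command equivalence (which identifies any two tier $\tiera$ commands) does not undermine the finiteness count of Step 1; working with equivalence classes of configurations rather than their syntactic forms is what makes this go through.
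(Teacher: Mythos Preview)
Your proposal is correct and follows essentially the same route as the paper: bound the number of tier-$\tierb$ configuration classes via Lemma~\ref{lem:espaceCalcul}, use concurrent temporal non-interference to show that two $\approx_{\typenv,\typop}$-equivalent reachable configurations with a positive loop-count gap would yield an infinite run (contradicting $M\!\Downarrow$), and conclude by pigeonhole. If anything, your Steps~2--3 are spelled out more carefully than the paper's own argument, which only sketches the ``repeat the same transition again up to infinity'' step and leaves the finiteness of the residual-command component implicit.
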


\begin{proof}
 By Theorem~\ref{CTNI}, the length of while-loops depends only on variables of tier $\tierb$.
 It implies that if we enter twice into a configuration with the same thread, say $\threada$, and the same values of tier $\tierb$, we know that $M$ is non-terminating. Indeed, it is possible to repeat the same transition again up to infinity by always firing the same sequence of global transitions.
 This contradicts the fact that $M \Downarrow$. 
 Consequently, we never enter twice in the same thread configuration. 
 Since the number of sub-words of a word of size $n$ is bounded by $n^2$, 
Lemma~\ref{lem:espaceCalcul} impies the number of distinct stores $\storb$ 
reachable from $\store$ is bounded polynomially by $\|\store\|_\tierb$.
It follows the number of configurations is polynomially bounded. 
Consequently there exists a polynomial $T$ 
such that the length of each terminating multi-threaded computation starting from $\store$ is bounded by $T(\|\store\|_\tierb)$.
Finally, we have that $t \leq T(\|\store\|_\tierb)$.
\qed
\end{proof}

We can now state our first main result:
\begin{theorem}\label{thm:Main}
Assume  that $M$ is a safe multi-threaded program. Moreover suppose that $M$ strongly terminates. There is a polynomial $Q$ such that:
$$\forall \da_1,\ldots,\da_n \in \Dom, \ \temps{M}{\da_1,\ldots,\da_n} \leq Q(\max_{i =1,n}(\taille{\da_i}))$$
\end{theorem}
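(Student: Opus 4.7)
The plan is to combine the polynomial bound on the number of $\textit{(TW$_{\vrai}$)}$ applications from Theorem~\ref{thm:multiPtime} with a syntactic potential argument that controls the remaining small-step rules. In the initial store $\store_0[\xaa \leftarrow \daa,\ldots,\xan \leftarrow \dan]$, every variable holds either the empty word or one of the inputs $\da_i$, hence $\|\store_0[\xaa \leftarrow \daa,\ldots,\xan \leftarrow \dan]\|_\tierb \leq \max_{i=1,n}\taille{\da_i}$. Theorem~\ref{thm:multiPtime} therefore yields a polynomial $T$ such that the number $t$ of $\textit{(TW$_{\vrai}$)}$ applications along any global reduction from this store is bounded by $T(\max_{i=1,n}\taille{\da_i})$.

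Introduce a potential $\Phi$ measuring the syntactic size of a command by $\Phi(\iskip) = \Phi(\xa \iasg \ea) = 1$, $\Phi(\ca_1 \isep \ca_2) = 1 + \Phi(\ca_1) + \Phi(\ca_2)$, $\Phi(\iif \ea \ithen \ca_1 \ielse \ca_2) = 1 + \Phi(\ca_1) + \Phi(\ca_2)$, and $\Phi(\iwhile(\ea)\{\ca\}) = 1 + \Phi(\ca)$; extend it additively to multi-threaded programs by $\Phi(M) = \sum_{\threada \in \dom(M)} \Phi(M(\threada))$ with $\Phi(\emptyset) = 0$. A routine inspection of the rules of Figure~\ref{fig:Com} shows that every global step other than an application of $\textit{(TW$_{\vrai}$)}$ strictly decreases $\Phi$ by at least $1$, whereas one $\textit{(TW$_{\vrai}$)}$ step unfolding a while-subterm $\iwhile(\ea)\{\ca\}$ increases $\Phi$ by exactly $1 + \Phi(\ca) = \Phi(\iwhile(\ea)\{\ca\})$. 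Moreover, no small-step rule fabricates a fresh while-subterm -- the while-true rule merely duplicates an existing while -- so every while encountered along the computation is a syntactic subterm of the initial program $M$, and each $\textit{(TW$_{\vrai}$)}$ therefore contributes at most $\Phi(M)$ units of potential.

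Summing across a full global run from $\store_0[\xaa \leftarrow \daa,\ldots,\xan \leftarrow \dan] \Imp M$ to $\store \Imp \emptyset$ of length $k$, the potential goes from $\Phi(M)$ down to $0$: the positive jumps sum to at most $t\cdot\Phi(M)$ and the strict decreases sum to at least $k-t$, so $k - t \leq \Phi(M) + t\cdot\Phi(M)$, that is $k \leq (t+1)(\Phi(M)+1)$. Combined with $t \leq T(\max_{i=1,n}\taille{\da_i})$, this yields $\temps{M}{\da_1,\ldots,\da_n} \leq Q(\max_{i=1,n}\taille{\da_i})$ for the polynomial $Q(x) = (T(x)+1)(\Phi(M)+1)$, which is indeed polynomial since $\Phi(M)$ depends only on the program. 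The main obstacle is bounding the per-step potential injection: a careless definition of $\Phi$, or a failure of the sub-command invariant on whiles, would allow the potential to blow up with the depth of nested unfoldings. The chosen $\Phi$, together with the observation that whiles are never created but only duplicated, reduces this to an elementary amortised argument plugged on top of Theorem~\ref{thm:multiPtime}.
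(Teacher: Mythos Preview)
Your proof is correct and follows the same overall strategy as the paper: invoke Theorem~\ref{thm:multiPtime} to bound the number $t$ of while-unfoldings, and then argue that the total number of global steps is linear in $t$ with a multiplicative constant depending only on the program. The paper simply asserts this last point in one line, writing $\temps{M}{\da_1,\ldots,\da_n}\leq r\cdot t+r$ for a constant $r$ depending on the size of $M$ and appealing to the fact that tier-$\tiera$ commands contain no while loops; your potential function $\Phi$ together with the sub-command invariant on whiles is precisely a rigorous way to justify that assertion, and your bound $(t+1)(\Phi(M)+1)$ matches the paper's $r\cdot t+r$ with $r=\Phi(M)+1$.
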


\begin{proof}
Suppose that  $\store_0[\xaa \leftarrow \daa,\ldots,\xan \leftarrow \dan] \Imp M  \Cmpt{t}  \store' \Imp \emptyset $.
The overall computational time is bounded by 
$\temps{M}{\da_1,\ldots,\da_n} \leq r.t+r$, for some constant $r$ which depends on the size of $M$.
(Note that commands of tier $\tiera$ are computable in constant size.)
We conclude by  Theorem~\ref{thm:multiPtime} and by setting $Q(X)=r.T(X)+r$.\qed
\end{proof}

\section{A characterization of polynomial time functions}

We now come to a characterization of the set of functions computable in polynomial time.
A sequential program $M$ consists in a single thread program (i.e. $\dom(M)=\{\threada\}$) and an output variable, say $\xb$.
The partial function $\sem{M}$ computed by $M$ is then defined by: \\
$
\sem{M}(\da_1,\ldots,\da_n)  = w  \text{ iff } \store_0[\xaa\!\leftarrow\!\daa,\ldots,\xan\!\leftarrow\!\dan] \Imp M \Csmt{g}^* \store \Imp \emptyset \text{ and } \store(\xb)=w
$

\begin{theorem}\label{erst}
The set of functions computed by strongly terminating and safe sequential programs whose operators compute polynomial time functions is exactly $\ptime$, which is the set of polynomial time computable functions.
\end{theorem}

\begin{proof}
The polynomial runtime upper bound is a consequence of Theorem~\ref{thm:multiPtime}. The converse is a straightforward simulation of polynomial time Turing machines. The proof is postponed in Appendix.
\end{proof}

%
%
\section{Deterministic scheduling}\label{sec:ds}
Actually, we can extend our results to a class of deterministic schedulers. 
Till now, we have considered a non-deterministic scheduling policy but in return we require that multi-threaded programs strongly terminate.  
Define $\restrict{\store}$ as the restriction of the store $\store$ to tier $\tierb$ variables. 
Say that a deterministic scheduler $\sch$ is \emph{quiet} if the scheduling policy depends only on the current multi-threaded program $M$ and on $\restrict{\store}$. 
For example, a deterministic scheduler whose policy just depends on running threads, is quiet. 
Notice that $\storb \approx_{\typenv,\typop} \storb'$ iff $\restrict{\storb}=\restrict{\storb'}$.
Next, we replace the non-deterministic global transition of Figure~\ref{fig:Com} by:
$$
\begin{array}{c}
\infer{\sch(M,\restrict{\store}) =  \threada  \quad \store \Imp M(\threada) \Csmt{s} \store'  }
{\store \Imp M \ \Csmt{g} \ \store' \Imp M-\threada}
\qquad
\infer{\sch(M,\restrict{\store}) =  \threada \quad \store \Imp M(\threada) \Csmt{s} \store' \Imp \ca' }
{\store \Imp  M \ \Csmt{g} \ \store' \Imp M[\threada:=\ca']}
\end{array}
$$

\begin{theorem}
Let $M$ be a safe multi-threaded program s.t. $M$ is terminating wrt a deterministic and quiet scheduler $\sch$. 
There is a polynomial $Q$ such that:
$$\forall \da_1,\ldots,\da_n \in \Dom, \ \temps{M}{\da_1,\ldots,\da_n} \leq Q(\max_{i =1,n}(\taille{\da_i}))$$
\end{theorem}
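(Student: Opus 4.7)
The plan is to adapt the proof of Theorem~\ref{thm:Main} by replacing the strong termination hypothesis by termination under the quiet deterministic scheduler $\sch$. It suffices to prove the analogue of Theorem~\ref{thm:multiPtime} in the new transition system: there is a polynomial $T$ such that whenever $\store \Imp M \Cmpt{t} \store' \Imp M'$ with respect to $\sch$, we have $t \leq T(\|\store\|_\tierb)$. Once this bound is established, the final inequality follows verbatim from the proof of Theorem~\ref{thm:Main}, setting $Q(X)=r\cdot T(X)+r$, since Confinement (Lemma~\ref{lem:confinement}) still guarantees that commands of tier $\tiera$ take only constant time.

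First I would observe that Concurrent Temporal Non-interference (Theorem~\ref{CTNI}) and Simple Security (Lemma~\ref{lem:simpSec}) apply to the sequential step relation $\Csmt{s}$, so they are insensitive to the choice of scheduling policy. What needs to be checked is that the new global transition $\Csmt{g}$ under $\sch$ also respects the equivalence $\approx_{\typenv,\typop}$. This is where quietness enters: if $\store \approx_{\typenv,\typop} \storb$ then by the remark in Section~\ref{sec:ds} we have $\restrict{\store}=\restrict{\storb}$, and therefore $\sch(M,\restrict{\store})=\sch(M,\restrict{\storb})$. Both sides thus activate the same thread $\threada$, and Theorem~\ref{CTNI} on the sequential step of $M(\threada)$ yields successor configurations that are again $\approx_{\typenv,\typop}$-equivalent with the same loop-length increment.

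Second, I would show that no \emph{abstract configuration} $(M,\restrict{\store})$ can be visited twice along the computation. Suppose for contradiction that at steps $i<j$ of the run we have $M_i=M_j$ and $\restrict{\store_i}=\restrict{\store_j}$, hence $\store_i \Imp M_i \approx_{\typenv,\typop} \store_j \Imp M_j$. Applying the determinacy established above at every subsequent step, the trajectory from step $j$ is $\approx_{\typenv,\typop}$-identical to the trajectory from step $i$, in particular its sequence of tier-$\tierb$ projections and of selected threads coincide. Because $\sch$ is deterministic, this trajectory can be iterated forever, contradicting termination of $M$ under $\sch$. Hence the run never revisits an abstract configuration.

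Third, I would bound the number of reachable abstract configurations polynomially in $\|\store\|_\tierb$. By Lemma~\ref{lem:espaceCalcul} every tier-$\tierb$ value reached is a boolean or a sub-word of some initial tier-$\tierb$ input, so the number of distinct $\restrict{\store'}$ is polynomial in $\|\store\|_\tierb$. On the syntactic side, by the same argument used in Theorem~\ref{thm:multiPtime} the running command of each thread ranges over the finite set of syntactic stages generated by the small-step rewriting of $M(\threada)$, contributing only a constant factor depending on $M$. Combining these two counts yields a polynomial bound on reachable abstract configurations, hence on $t$, and Theorem~\ref{thm:Main}'s concluding computation produces the polynomial $Q$. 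The main obstacle is the second step: properly converting Theorem~\ref{CTNI}'s conclusion, which only guarantees $\approx_{\typenv,\typop}$-equivalent successors, into the stronger statement that identical abstract configurations generate identical future abstract sequences, so that repetition really produces an infinite computation under $\sch$.
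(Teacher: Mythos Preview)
Your proposal is correct and follows essentially the same line as the paper's own proof: reduce to bounding the loop-length $t$ via a no-revisit argument on abstract configurations $(N,\restrict{\storb})$, then invoke Lemma~\ref{lem:espaceCalcul} to bound the number of such pairs polynomially and conclude as in Theorem~\ref{thm:Main}. The paper is in fact terser than you on the point you flag as the main obstacle; your explicit use of quietness to force the scheduler to make the same choice on $\approx_{\typenv,\typop}$-equivalent configurations is exactly the missing ingredient and is only implicit in the paper's argument.
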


\begin{proof}
The proof follows the outline of proofs of theorems~\ref{thm:multiPtime} and ~\ref{thm:Main}. 
Let $\store$ be the initial store, \ie $\store(\xai)=\dai$ for $i=1,n$ and $\store(\xai)=\motVide$ for $i>n$.
Since the computation of $\store \Imp M$ terminates wrt $\sch$, the temporal non-interference theorem~\ref{thm:TNI} implies 
that a loop can not reach  the configurations $\storb \Imp N$ and $\storb' \Imp N$ 
where their restrictions to tier $\tierb$ values are identical. That is $\restrict{\storb}=\restrict{\storb'}$.
Now, define $\config = \{ (\restrict{\storb},N) \ | \ \store \Imp M  \Csmt{g}^*  \storb \Imp N  \} $. 
The total length of loops is bounded by the cardinality of $\config$.
Following lemma~\ref{lem:espaceCalcul}, the cardinality of $\config$ is bounded by a polynomial in $\|\store\|_\tierb$.
As a result, the runtime of $\store \Imp M$ is bounded bounded by $Q(\max_{i =1,n}(\taille{\da_i}))$ for some polynomial $Q$.
\end{proof}

\bibliographystyle{plain}
\bibliography{icalp}

\newpage
\appendix
\section{Appendix}
\subsection{Proofs}

\subsubsection{Characterization of polynomial time functions}

\setcounter{theorem}{6}
\begin{theorem}\label{erst}
The set of functions computed by strongly terminating and safe sequential programs whose operators compute polynomial time functions is exactly $\ptime$, which is the set of polynomial time computable functions.
\end{theorem}

\begin{proof}
By Theorem~\ref{thm:Main}, the execution time of a safe and strongly terminating sequential program is bounded by a polynomial in the size of the initial values. \\
In the other direction, we show that every polynomial time function over the set of words $\Dom$ can be computed by a safe and terminating program. Consider a Turing Machine $TM$, with one tape and one head,  which computes within $n^k$ steps for some constant $k$ and where $n$ is the input size. The tape of $TM$ is represented by two variables $\gauche$ and $\droite$ which contain respectively the reversed left side of the tape and the right side of the tape. States are encoded by constant words and the current state is stored in the variable $\vetat$. We assign to each of these three variables that hold a configuration of TM the tier $\tiera$.
A one step transition is simulated by a finite cascade of if-commands of the form:
\begin{lstlisting}[frame=none]
$\iif \meq{a}(\droite^\tiera)^\tiera$
   $\ithen$
      $\iif \meq{s}(\vetat^\tiera)^\tiera$ 
      $\ithen $
         $\vetat^\tiera \iasg s'^\tiera; : \tiera$ 
         $\gauche^\tiera \iasg \msuc{b}(\gauche^\tiera); : \tiera$
         $\droite^\tiera \iasg \mpred(\droite^\tiera) : \tiera$
      $\ielse \ldots : \tiera$
   $\ldots$
\end{lstlisting}
The above command expresses that if the current read letter is $a$ and the state is $s$, then the next state is $s'$, the head moves to the right and the read letter is replaced by $b$.
Since each variable inside the above command is of type $\tiera$, the type of the if-command is also $\tiera$. Moreover, since $\msuc{b}$ is a positive operator, its type is forced to be $\tiera \to \tiera$. $\meq{a}, \meq{s}$ and $\mpred$ being neutral operators, they can also be typed by $\tiera \to \tiera$.\\
Finally, it just remains to show that every polynomial can be simulated by a safe program of tier $\tierb$. We have already provided the programs for addition and multiplication in Example~\ref{e1} and we let the reader check that it can be generalized to any polynomial.\qed
\end{proof}

\subsection{Examples}
In what follows, let $\ea^\alpha$, respectively $\ca : \alpha$, be a notation meaning  that the expression $\ea$, respectively command $\ca$, is of type $\alpha$ under the considered typing environments.

\begin{example}\label{e1}
Consider the sequential programs $add_\xb$ and $mul_\xc$ that compute respectively addition and multiplication on unary words using the positive successor operator $+1$, in infix notation, and two neutral operators, $-1$  and
a unary predicate $>0$, both in infix notation.  
Both programs are safe by checking that their main commands are well-typed wrt the safe operator typing environment $\typop$ defined by $\typop(+1)=\{ \tiera \to \tiera\}$ and  $\typop(-1)=\typop(>0)=\{ \tierb \to \tierb\}$.
\begin{lstlisting}
$add_{\xb} :$                          $mul_\xc :$
   $\iwhile (\xa^\tierb > 0)^\tierb \{ $                           $\xc^\tiera \iasg 0^\tiera; :{\tiera}$
      $\xa^\tierb \iasg \xa^\tierb-1; : \tierb$                        $\iwhile ( \xa^{\tierb}> 0) ^\tierb\{$
      $\xb^\tiera \iasg \xb^\tiera+1 :\tiera$                            $\xa^\tierb \iasg \xa^\tierb-1; : {\tierb}$
   $ \}  :{\tierb}$                               $\xd^\tierb \iasg \xb^\tierb; : \tierb$
                                  $ \iwhile (\xb^{\tierb}> 0)^\tierb\{$ 
                                    $\xb^\tierb \iasg \xb^\tierb-1; : {\tierb}$
                                    $\xc^\tiera \iasg \xc^\tiera+1 : {\tiera}$
                                  $\}; : {\tierb}$   
                                  $\xb^\tierb \iasg \xd^\tierb : \tierb$ 
                               $\} : {\tierb}$   
\end{lstlisting}
\end{example}

\begin{example}
Consider the following multi-thread $M$ composed of two threads $x$ and $y$ computing on unary numbers:
\begin{lstlisting}
$x:$                          $y:$
   while $(\xa^\tierb > 0)^\tierb \{ $                      while $(\xb^\tierb > 0)^\tierb \{ $   
      $\xc^\tiera \iasg \xc^\tiera+1; : \tiera$                            $\xc^\tiera =0; : \tiera $
      $\xa^\tierb \iasg \xa^\tierb-1; : \tierb$                            $\xb^\tierb \iasg \xb^\tierb-1; : \tierb$
   $ \}  :{\tierb}$                            $ \}  :{\tierb}$   
\end{lstlisting}
This program is strongly terminating. Moreover, given a store $\store$ such that $\store(\xa)=n$ and $\store(\xc)=0$, if $\store \Imp M \Csmt{g}^k \store' \Imp \emptyset$ then $\store'(\xc) \in [0,n]$. $M$ is safe using an operator typing environment $\typop$ such that $\typop(-1)=\typop(>0)=\{\tierb \to \tierb\}$ and $\typop(+1)=\{\tiera \to \tiera\}$ and $M \Downarrow$. Consequently, by Theorem~\ref{thm:multiPtime}, there is a polynomial $T$ such that for each store $\store$, $k \leq T( \| \store \|_\tierb)$. 
\end{example}

\begin{example}
Consider the following multi-thread $M$ that shuffles two strings given as inputs:
\begin{lstlisting}
$x:$                          $y:$
   while $(\neg \meq{\epsilon}(\xa^\tierb) )^\tierb \{ $                      while $(\neg \meq{\epsilon}(\xb^\tierb ) )^\tierb \{ $   
      $\xc^\tiera \iasg concat(head(\xa^\tierb),\xc^\tiera); : \tiera$                            $\xc^\tiera \iasg concat(head(\xb^\tierb),\xc^\tiera); : \tiera$
      $\xa^\tierb \iasg pred(\xa^\tierb); : \tierb$                            $\xb^\tierb \iasg pred(\xb^\tierb); : \tierb$
   $ \}  :{\tierb}$                            $ \}  :{\tierb}$   
\end{lstlisting}
The negation operator $\neg$ and $\meq{\epsilon}$ are unary predicates and consequently can be typed by $\tierb \to \tierb$. The operator $head$ returns the first symbol of a string given as input and can be typed by $\tierb \to \tiera$ since it is neutral. The $\mpred$ operator can typed by $\tierb \to \tierb$ since its computation is a subterm of the input. Finally, the concat operator that performs the concatenation of the symbol given as first argument with the second argument can be typed by $\tiera \to \tiera \to \tiera$ since $\taille{\sem{concat}(u,v)}=\taille{v}+1$.
This program is safe and strongly terminating consequently it also terminates in polynomial time.
\end{example}

\begin{example}
Consider the following multi-thread $M$:
\begin{lstlisting}
$x:$                          $y:$
   while $(\xa^\tierb >0)^\tierb \{ $                      while $(\xb^\tierb >0)^\tierb \{ $   
      $\xb^\tierb \iasg \xa^\tierb; : \tierb$                            $\xc^\tiera \iasg \xc^\tiera+1; : \tiera$
      $\xa^\tierb \iasg \xa^\tierb-1; : \tierb$                            $\xb^\tierb \iasg  \xb^\tierb-1; : \tierb$
   $ \}  :{\tierb}$                            $ \}  :{\tierb}$   
\end{lstlisting}
Observe that, contrarily to previous examples, the guard of $y$ depends on information flowing from $X$ to $Y$.
Given a store $\store$ such that $\store(\xa)=n$, $\store(\xb)=\store(\xc)=0$, if $\store \Imp M \Csmt{g}^k \store' \Imp \emptyset$ then $\store'(\xc) \in [0,n \times (n+1)/2]$. 
This multi-thread is safe with respect to a safe typing operator environment $\typop$ such that $\typop(-1)=\typop(>0)=\{\tierb \to \tierb\}$ and $\typop(+1)=\{\tiera \to \tiera\}$. Moreover it strongly terminates. Consequently, it also terminates in polynomial time.
\end{example}

\begin{example}
The following program computes the exponential:
\begin{lstlisting}[frame=none]
$exp_{\xb}(\xa^{\tierb},\xb^{\tiera}) :$
   $\iwhile ( \xa^{\tierb}> 0)\{$
      $\xd^? \iasg \xb^\tiera; : {?}$   
      $ \iwhile (\xd^{?}> 0)\{$ 
         $\xb^\tiera \iasg \xb^\tiera+1; : {\tiera}$
         $\xd^? \iasg \xd^?-1 : {?}$
      $\}; : {\tierb}$ 
      $\xa^\tierb \iasg \xa^\tierb-1 : \tierb$
   $\}; : {\tierb}$
\end{lstlisting}
It is not typable in our formalism. Indeed, suppose that it is typable. The command $\xb \iasg \xb+1$ enforces $\xb$ to be of tier $\tiera$ since $+1$ is positive. Consequently, the command $\xd \iasg \xb$ enforces $\xd$ to be of tier $\tiera$ because of typing discipline for assignments. However, the innermost while loop enforces $\xd >0$ to be of tier $\tierb$, so that $\xd$ has to be of tier $\tierb$ (because $\tiera \to \tierb$ is not permitted for a safe operator typing environment) and we obtain a contradiction.
\end{example}

\begin{example}
As another counter-example, consider now the addition $badd$ on binary words:
\begin{lstlisting}
$badd_{\xb} :$                         
   $\iwhile (\xa^? > 0)^? \{ $                   
      $\xa^? \iasg \xa^?-1; : \ ?$                     
      $\xb^\tiera \iasg \xb^\tiera+1 :\tiera$                     
   $ \}  :{\tierb}$                        
\end{lstlisting}
Contrarily to Example~\ref{e1}, the above program is not typable because the operator $-1$ has now type $\typop(-1)=\{ \tiera \to \tiera\}$. Indeed it cannot be neutral since binary predecessor is not a subterm operator. Consequently, $-1$ is positive and the assignment $\xa \iasg \xa-1$ enforces $\xa$ to be of type $\tiera$ whereas the loop guard enforces $\xa$ to be of tier $\tierb$. Note that this counter-example is not that surprising in the sense that a binary word of size $n$ may lead to a loop of length $2^n$ using the $-1$ operator. Of course this does not imply that the considered typing discipline rejects computations on binary words, it only means that this type system rejects exponential time programs. Consequently,``natural'' binary addition algorithms are captured as illustrated by the following program that computes the binary addition on reversed binary words of equal size:
\begin{lstlisting}
$binary$_$add_{\xc} :$   
   $\iwhile (\neg \meq{\epsilon}(\xa^\tierb)  )^\tierb \{ $  
      $\xe^\tiera \iasg result(bit(\xa^\tierb),bit(\xb^\tierb),bit(\xf^\tierb)); : \tiera$
      $\xf^\tierb \iasg carry(bit(\xa^\tierb),bit(\xb^\tierb),bit(\xf^\tierb)); : \tierb$
      $\xc^\tiera \iasg concat(\xe^\tiera, \xc^\tiera); : \tiera$
      $\xa^\tierb \iasg \mpred{}(\xa^\tierb); : \tierb$
      $\xb^\tierb \iasg \mpred{}(\xb^\tierb); : \tierb$              
   $ \}  :{\tierb}$       
\end{lstlisting}
As usual, $\mpred$ is typed by $\tierb \to \tierb$. The negation operator $\neg$ and $\meq{\epsilon}$ are predicates and, consequently, can be typed by $\tierb \to \tierb$, since they are neutral. The operator $bit$ returns $\vrai$ or $\faux$ depending on whether the word given as input has first digit $1$ or $0$, respectively. Consequently, it can be typed by $\tierb \to \tierb$. The operators $carry$ and $result$, that compute the carry and the result of bit addition, can be typed by $\tierb \to \tierb \to \tierb \to \tierb$ since they are neutral. Finally, the operator $concat(x,y)$ defined by if $\sem{bit}(x)=i, i \in\{0,1\}$ then $\sem{concat}(x,y)=i.y$ is typed by $\tiera \to \tiera \to \tiera$. Indeed it is a positive operator since $\taille{\sem{concat}(x,y)} = \taille{y}+1$.
\end{example}
\end{document}